\newtheorem{theorem}{Theorem}
\newtheorem{lemma}{Lemma}
\newtheorem{proposition}{Proposition}
\newtheorem{definition}{Definition}
\newtheorem{example}{Example}
\newtheorem{remark}{Remark}
\newcommand{\naturals}{\ensuremath{\mathbb{N}}}
\newcommand{\Reals}{\ensuremath{\mathbb{R}}}
\newcommand{\expectation}{\ensuremath{\mathbb{E}}}
\newcommand{\set}{\ensuremath{\mathcal}}
\begin{document}
\title{\huge{Tight Bounds on the R\'{e}nyi Entropy via Majorization\\
with Applications to Guessing and Compression}\footnote{The paper
was published in the {\em Entropy} journal (special issue on {\em Probabilistic
Methods in Information Theory, Hypothesis Testing, and Coding}), vol.~20, no.~12,
paper no.~896, November~22, 2018. Online available at \url{https://www.mdpi.com/1099-4300/20/12/896}.}}
\author{\vspace*{0.2cm} Igal Sason\\[-0.1cm]
Department of Electrical Engineering\\[-0.1cm]
Technion--Israel Institute of Technology\\[-0.1cm]
Haifa 3200003, Israel\\[-0.1cm]
E-mail: \url{sason@ee.technion.ac.il}.}

\maketitle

\begin{abstract}
This paper provides tight bounds on the R\'{e}nyi entropy of a function of a
discrete random variable with a finite number of possible values, where the
considered function is not one-to-one. To that end, a tight lower bound on
the R\'{e}nyi entropy of a discrete random variable with a finite support is
derived as a function of the size of the support, and the ratio of the
maximal to minimal probability masses. This work was inspired by the recently
published paper by Cicalese {\em et al.}, which is focused on the Shannon entropy,
and it strengthens and generalizes the results of that paper to R\'{e}nyi
entropies of arbitrary positive orders.
In view of these generalized bounds and the works by
Arikan and Campbell, non-asymptotic bounds are derived for guessing moments
and lossless data compression of discrete memoryless sources.
\end{abstract}

{\bf{Keywords}}:
Majorization,
R\'{e}nyi entropy,
R\'{e}nyi divergence,
cumulant generating functions,
guessing moments,
lossless source coding,
fixed-to-variable source codes,
Huffman algorithm,
Tunstall codes.

\eject

\section{Introduction}
\label{section: Introduction}

Majorization theory is a simple and productive concept in the
theory of inequalities, which also unifies a variety of familiar bounds \cite{HLP52,MarshallOA}.
The concept of majorization finds various applications in diverse fields (see, e.g., \cite{Arnold07})
such as economics \cite{ArnoldS18,CicaleseGV13,MarshallOA}, combinatorial analysis
\cite{MarshallOA,Steele}, geometric inequalities \cite{MarshallOA}, matrix theory
\cite{Bhatia,HornJ, MarshallOA,Steele}, Shannon theory \cite{Ben-Bassat-Raviv,CicaleseGV04,
CicaleseGV13,CicaleseGV17,CicaleseGV18,CicaleseV_ISIT18,Harremoes04,HoY_IT2010,HoSV_IT2010,
HoSV-ISIT15,Joe88,Joe90,Koga13,Puchala13,SasonV18a,verdubook,Witsenhhausen,XiWZ11}, and wireless
communications \cite{Hanly2012,FnT06a,FnT06b,Roventa2015,SezginJ10,Pramod1999,Pramod1999b,Pramod2002}.

This work, which relies on the majorization theory, has been greatly inspired by the
recent insightful paper by Cicalese {\em et al.} \cite{CicaleseGV18}.\footnote{The
research work in the present paper has been initialized while the author handled \cite{CicaleseGV18}
as an associate editor.} The work in \cite{CicaleseGV18}
provides tight bounds on the Shannon entropy of a function of a discrete random variable with a finite
number of possible values, where the considered function is not one-to-one. For that purpose,
and while being of interest by its own right (see \cite[Section~6]{CicaleseGV18}), a tight
lower bound on the Shannon entropy of a discrete random variable with a finite support was
derived in \cite{CicaleseGV18} as a function of the size of the support, and the ratio of the
maximal to minimal probability masses. The present paper aims to extend the bounds in \cite{CicaleseGV18}
to R\'{e}nyi entropies of arbitrary positive orders (note that the Shannon entropy is equal to
the R\'{e}nyi entropy of order~1), and to study the
information-theoretic applications of these (non-trivial) generalizations in the context of
non-asymptotic analysis of guessing moments and lossless data compression.

The motivation for this work is rooted in the diverse information-theoretic applications of R\'{e}nyi
measures \cite{Renyientropy}. These include (but are not limited to) asymptotically tight bounds on guessing moments
\cite{Arikan96}, information-theoretic applications such as guessing subject to distortion \cite{ArikanM98-1},
joint source-channel coding and guessing with application to sequential decoding \cite{ArikanM98-2},
guessing with a prior access to a malicious oracle \cite{BurinS_IT18}, guessing while allowing
the guesser to give up and declare an error \cite{Kuzuoka18}, guessing in secrecy problems
\cite{MerhavAr99,Sundaresan07b}, guessing with limited memory \cite{Salam17}, and guessing
under source uncertainty \cite{Sundaresan07}; encoding tasks \cite{BracherLP17,BunteL14a};
Bayesian hypothesis testing \cite{Ben-Bassat-Raviv,SasonV18a,verdubook},
and composite hypothesis testing \cite{Shayevitz_ISIT11,TomamichelH16};
R\'{e}nyi generalizations of the rejection sampling problem in \cite{Harsha10}, motivated
by the communication complexity in distributed channel simulation, where these generalizations distinguish
between causal and non-causal sampler scenarios \cite{LiuSV18}; Wyner's common information
in distributed source simulation under R\'{e}nyi divergence measures \cite{YuT18};
various other source coding theorems \cite{Campbell65,CV2014a,CV2014b,HT18,Kuzuoka16,
Kuzuoka18,SasonV18b,TanH18,Tyagi17,verdubook}, channel coding theorems \cite{Arimoto73,
Arimoto75,Csiszar95,PolyanskiyV10,Sason16,Tyagi17,verdubook,YuT17}, including coding theorems
in quantum information theory \cite{Dalai13,Leditzky16,MosonyiO15}.

The presentation in this paper is structured as follows:
Section~\ref{section: notation} provides notation and essential preliminaries
for the analysis in this paper.
Sections~\ref{section:thm-RE} and~\ref{section: Function of RV} strengthen
and generalize, in a non-trivial way, the bounds on the Shannon entropy in \cite{CicaleseGV18}
to R\'{e}nyi entropies of arbitrary positive orders (see Theorems~\ref{theorem: c}
and~\ref{thm:Huffman}).
Section~\ref{section: IT Applications} relies on the generalized bound from
Section~\ref{section: Function of RV} and the work by Arikan \cite{Arikan96} to derive
non-asymptotic bounds for guessing moments (see Theorem~\ref{thm: guessing});
Section~\ref{section: IT Applications} also relies on the generalized bound in Section~\ref{section: Function of RV}
and the source coding theorem by Campbell \cite{Campbell65} (see Theorem~\ref{theorem: Campbell})
for the derivation of non-asymptotic bounds for lossless compression of discrete memoryless sources
(see Theorem~\ref{thm: lossless compression}).

\section{Notation and Preliminaries}
\label{section: notation}
Let
\begin{itemize}
\item $P$ be a probability mass function defined on a finite set $\set{X}$;
\item $p_{\max}$ and $p_{\min}$ be, respectively, the maximal and minimal positive masses of $P$;
\item $G_P(k)$ be the sum of the $k$ largest masses of $P$ for $k \in \{1, \ldots, |\set{X}|\}$
(note that $G_P(1) = p_{\max}$ and $G_P(|\set{X}|) =1$);
\item $\set{P}_n$, for an integer $n \geq 2$, be the set of all probability mass functions defined on
$\set{X}$ with $|\set{X}|=n$; without any loss of generality, let $\set{X} = \{1, \ldots, n\}$;
\item $\set{P}_n(\rho)$, for $\rho \geq 1$ and an integer $n \geq 2$, be the subset of all probability
measures $P \in \set{P}_n$ such that
\begin{align}
\label{eq: ratio max/min}
\frac{p_{\max}}{p_{\min}} \leq \rho.
\end{align}
\end{itemize}

\begin{definition}[\em{majorization}]
\label{definition: majorization}
Consider discrete probability mass functions $P$ and $Q$ defined on the same
(finite or countably infinite) set $\set{X}$. It is said that $P$ is {\em majorized} by $Q$
(or $Q$ majorizes $P$), and it is denoted by $P \prec Q$, if $G_P(k) \leq G_Q(k)$ for
all $k \in \{1, \ldots, |\set{X}|-1\}$ (recall that $G_P(|\set{X}|)=G_Q(|\set{X}|)=1$).
If $P$ and $Q$ are defined on finite sets of different cardinalities, then the
probability mass function which is defined over the smaller set is first padded
by zeros for making the cardinalities of these sets be equal.
\end{definition}

By Definition~\ref{definition: majorization}, a unit mass majorizes
any other distribution; on the other hand, the equiprobable distribution on a finite
set is majorized by any other distribution defined on the same set.

\begin{definition}[\em{Schur-convexity/concavity}]
A function $f \colon \set{P}_n \to \Reals$ is said to be {\em Schur-convex} if for
every $P,Q \in \set{P}_n$ such that $P \prec Q$, we have $f(P) \leq f(Q)$. Likewise,
$f$ is said to be {\em Schur-concave} if $-f$ is Schur-convex, i.e., $P,Q \in \set{P}_n$
and $P \prec Q$ imply that $f(P) \geq f(Q)$.
\end{definition}

\begin{definition}[\em{R\'{e}nyi entropy} \cite{Renyientropy}]
\label{definition: Renyi entropy}
Let $X$ be a random variable taking values on a finite
or countably infinite set $\set{X}$, and let $P_X$ be its probability
mass function. The R\'{e}nyi entropy of order
$\alpha \in (0,1) \cup (1, \infty)$
is given by\footnote{Unless explicitly stated, the logarithm base
can be chosen by the reader, with $\exp$ indicating the inverse function
of $\log$.}
\begin{align} \label{eq: Renyi entropy}
H_{\alpha}(X) & = H_{\alpha}(P_X) = \frac1{1-\alpha} \, \log
\left( \, \sum_{x \in \set{X}} P_X^{\alpha}(x)\right).
\end{align}
By its continuous extension,
\begin{align}
\label{eq: RE of zero order}
& H_0(X) = \log \, \bigl| \{x \in \set{X} \colon
P_X(x) > 0 \} \bigr|, \\
\label{eq: Shannon entropy}
& H_1(X) = H(X), \\
\label{eq: RE of infinite order}
& H_{\infty}(X) = \log \frac1{p_{\max}}
\end{align}
where $H(X)$ is the (Shannon) entropy of $X$.
\end{definition}

\begin{proposition}[Schur-concavity of the R\'{e}nyi entropy (Appendix F.3.a (p.~562) of \cite{MarshallOA})]
\label{proposition: Renyi entropy is Schur-concave}
The R\'{e}nyi entropy of an arbitrary order $\alpha > 0$ is Schur-concave;
in particular, for $\alpha=1$, the Shannon entropy is Schur-concave.
\end{proposition}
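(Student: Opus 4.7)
The plan is to reduce the claim to a classical principle: if $\varphi \colon [0,1] \to \Reals$ is concave (resp.\ convex), then the symmetric sum $P \mapsto \sum_{x \in \set{X}} \varphi(P(x))$ is Schur-concave (resp.\ Schur-convex) on $\set{P}_n$. I would establish this principle via the Hardy--Littlewood--P\'olya characterization of majorization: $P \prec Q$ holds iff $P$ is obtained from $Q$ by a finite sequence of ``Robin Hood'' transfers, each moving a small amount of mass from a larger entry to a smaller one; a two-variable Jensen inequality then shows that each such transfer can only increase a sum of concave terms (and decrease a sum of convex terms). Equivalently, the Schur--Ostrowski criterion applies at once, since $\partial_i \sum_{j} \varphi(P_j) = \varphi'(P_i)$ and monotonicity of $\varphi'$ under concavity/convexity supplies the required sign condition.

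Given this principle, I would split into cases. For $\alpha \in (0,1) \cup (1,\infty)$, write
\begin{align*}
H_\alpha(P) = \frac{1}{1-\alpha} \log S_\alpha(P), \qquad S_\alpha(P) := \sum_{x \in \set{X}} P(x)^\alpha,
\end{align*}
and take $\varphi(t) = t^\alpha$. This $\varphi$ is concave on $[0,1]$ for $0 < \alpha < 1$, giving $S_\alpha$ Schur-concave, while it is convex for $\alpha > 1$, giving $S_\alpha$ Schur-convex. Combined with the monotonicity of $\log$ and the sign of $\tfrac{1}{1-\alpha}$, both sub-cases collapse to Schur-concavity of $H_\alpha$. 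For the boundary case $\alpha = 1$, the Shannon entropy reads $H(P) = \sum_{x \in \set{X}} \eta(P(x))$ with $\eta(t) = -t \log t$ concave on $[0,1]$, so the same principle applies directly.

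There is no real difficulty here; the only bookkeeping point is the simultaneous sign flip at $\alpha = 1$ of both $\tfrac{1}{1-\alpha}$ and the Schur type of $S_\alpha$, which cancel to yield a uniform conclusion across all $\alpha > 0$. Accordingly, the argument is short, and the proposition can alternatively be invoked directly from \cite[Appendix F.3.a, p.~562]{MarshallOA}, as cited in the statement.
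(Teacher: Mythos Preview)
Your argument is correct and is precisely the standard route to this result. Note, however, that the paper does not actually supply a proof of this proposition: it is stated with a direct citation to \cite[Appendix~F.3.a, p.~562]{MarshallOA} and is used as a black box throughout. So there is no ``paper's own proof'' to compare against; your write-up simply fills in what the reference contains, and your closing remark that the proposition can alternatively be invoked directly from \cite{MarshallOA} is exactly what the paper does.
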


\begin{remark}
\label{remark: beyond Prop. 1}
\cite[Theorem~2]{HoSV-ISIT15} strengthens Proposition~\ref{proposition: Renyi entropy is Schur-concave},
though it is not needed for our analysis.
\end{remark}

\begin{definition}[\em{R\'{e}nyi divergence} \cite{Renyientropy}]
Let $P$ and $Q$ be probability mass functions defined on a finite or countably infinite set $\set{X}$.
The {\em R\'{e}nyi divergence of order $\alpha \in [0, \infty]$} is defined as follows:
\begin{itemize}
\item
If $\alpha \in (0,1) \cup (1, \infty) $, then
\begin{align} \label{eq:RD0}
D_{\alpha}(P\|Q) = \frac1{\alpha-1} \; \log \,
\sum_{x \in \set{X}} P^\alpha (x) \,
Q^{1-\alpha} (x).
\end{align}
\item By the continuous extension of $D_{\alpha}(P \| Q)$,
\begin{align}
\label{eq: d0}
D_0(P \| Q) &= \underset{\set{A}: P(\set{A})=1}{\max} \log \frac1{Q(\set{A})}, \\[-0.1cm]
\label{def:d1}
D_1(P\|Q) &= D(P\|Q), \\[-0.1cm]
\label{def:dinf}
D_{\infty}(P\|Q) &= \log \, \sup_{x \in \set{X}} \frac{P(x)}{Q(x)},
\end{align}
where $D(P\|Q)$ in the right side of \eqref{def:d1} is the relative entropy (a.k.a. Kullback-Leibler divergence).
\end{itemize}
\end{definition}

Throughout this paper, for $a \in \Reals$, $\lceil a \rceil$ denotes the ceiling of $a$ (i.e.,
the smallest integer not smaller than the real number $a$), and $\lfloor a \rfloor$ denotes
the flooring of $a$ (i.e., the greatest integer not greater than $a$).

\section{A Tight Lower Bound on the R\'{e}nyi Entropy}
\label{section:thm-RE}

We provide in this section a tight lower bound on the R\'{e}nyi entropy, of an arbitrary
order $\alpha > 0$, when the probability mass function of the discrete random variable
is defined on a finite set of cardinality
$n$, and the ratio of the maximal to minimal probability masses is upper bounded by an arbitrary
fixed value $\rho \in [1, \infty)$. In other words, we derive the largest possible gap between
the order-$\alpha$ R\'{e}nyi entropies of an equiprobable distribution and a non-equiprobable
distribution (defined on a finite set of the same cardinality) with a given value for the ratio
of the maximal to minimal probability masses. The basic tool used for the development of our
result in this section relies on the majorization theory. Our result strengthes the result in
\cite[Theorem~2]{CicaleseGV18} for the Shannon entropy, and it further provides a generalization
for the R\'{e}nyi entropy of an arbitrary order $\alpha > 0$ (recall that the Shannon entropy is
equal to the R\'{e}nyi entropy of order $\alpha=1$, see \eqref{eq: Shannon entropy}). Furthermore,
the approach for proving the main result in this section differs significantly from the proof in
\cite{CicaleseGV18} for the Shannon entropy. The main result in this section is a key result for
all what follows in this paper.

The following lemma is a restatement of \cite[Lemma~6]{CicaleseGV18}.

\begin{lemma}
\label{lemma: Lemma 6 from CicaleseGV18}
Let $P \in \set{P}_n(\rho)$ with $\rho \geq 1$ and an integer $n \geq 2$, and assume
without any loss of generality that the probability mass function $P$ is defined on the set
$\set{X} = \{1, \ldots, n\}$.
Let $Q \in \set{P}_n$ be defined on $\set{X}$ as follows:
\begin{align}
\label{eq:Q pmf}
Q(j) =
\begin{dcases}
\rho \, p_{\min}, & \quad \mbox{$j \in \{1, \ldots, i\}$,}\\
1-(n+i \rho - i-1) p_{\min}, & \quad \mbox{$j=i+1$,}\\
p_{\min}, &\quad \mbox{$j \in \{i+2, \ldots, n\}$}
\end{dcases}
\end{align}
where
\begin{align}
\label{eq:i index}
i := \left\lfloor \frac{1-np_{\min}}{(\rho-1) \, p_{\min}} \right\rfloor.
\end{align}
Then,
\begin{enumerate}[1)]
\item $Q \in \set{P}_n(\rho)$, and $Q(1) \geq Q(2) \geq \ldots \geq Q(n) > 0$;
\item $P \prec Q$.
\end{enumerate}
\end{lemma}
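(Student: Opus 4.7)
The plan is to verify both items by direct computation, with the floor definition \eqref{eq:i index} carrying all of the subtlety. The argument has three steps: normalization of $Q$, verification of its ordering together with the ratio bound, and a two-case check of the majorization inequality $G_P(k)\le G_Q(k)$ for every $k\in\{1,\ldots,n-1\}$.

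First I would check that $Q$ is a probability mass function by summing the three cases of \eqref{eq:Q pmf},
\begin{align*}
\sum_{j=1}^n Q(j) = i\,(\rho\, p_{\min}) + \bigl[1-(n+i\rho-i-1)\,p_{\min}\bigr] + (n-i-1)\,p_{\min} = 1,
\end{align*}
after the $p_{\min}$ terms cancel. For item~1 it remains to sandwich $Q(i+1)$ in $[p_{\min},\rho\, p_{\min}]$. The inequality $Q(i+1)\ge p_{\min}$ rearranges to $i\le (1-np_{\min})/\bigl((\rho-1)p_{\min}\bigr)$, which is the floor definition \eqref{eq:i index}; the inequality $Q(i+1)\le \rho\, p_{\min}$ rearranges to $(1-np_{\min})/\bigl((\rho-1)p_{\min}\bigr) < i+1$, which is the strict upper bound satisfied by any floor. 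Together these yield the full chain $Q(1)=\cdots=Q(i)=\rho\, p_{\min}\ge Q(i+1)\ge p_{\min}=Q(i+2)=\cdots=Q(n)>0$, and $\max_j Q(j)/\min_j Q(j)\le \rho$, i.e., $Q\in\set{P}_n(\rho)$.

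For item~2, reindex $\set{X}$ so that $P(1)\ge P(2)\ge\cdots\ge P(n)$, and use $G_P(k)=\sum_{j=1}^k P(j)$ and (by item~1) $G_Q(k)=\sum_{j=1}^k Q(j)$. Split $k\in\{1,\ldots,n-1\}$ into two regimes. When $k\le i$, every one of the first $k$ masses of $Q$ equals $\rho\, p_{\min}$, so $G_Q(k)=k\rho\, p_{\min}\ge k\, p_{\max}\ge G_P(k)$, where the first inequality uses $p_{\max}\le \rho\, p_{\min}$. When $k\ge i+1$, a direct simplification of \eqref{eq:Q pmf} gives
\begin{align*}
G_Q(k) = i\rho\, p_{\min} + \bigl[1-(n+i\rho-i-1)\,p_{\min}\bigr] + (k-i-1)\,p_{\min} = 1-(n-k)\,p_{\min},
\end{align*}
while $G_P(k) = 1 - \sum_{j=k+1}^n P(j) \le 1-(n-k)\,p_{\min}$ because $P(j)\ge p_{\min}$ for every $j$. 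Hence $G_P(k)\le G_Q(k)$ in both regimes, which is exactly $P\prec Q$.

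The only real obstacle is bookkeeping around the floor: \eqref{eq:i index} is calibrated precisely so that $Q(i+1)\in[p_{\min},\rho\, p_{\min}]$, which in turn makes the natural split $k\le i$ versus $k\ge i+1$ invoke the ratio bound $p_{\max}\le \rho\, p_{\min}$ in the concentrated region and the lower bound $P(j)\ge p_{\min}$ in the flat tail. No Schur-concavity or R\'enyi-specific machinery is needed at this stage; this lemma is purely a statement about shapes of probability vectors and is the scaffolding on which the R\'enyi-entropy bound of Section~\ref{section:thm-RE} will be erected via Proposition~\ref{proposition: Renyi entropy is Schur-concave}.
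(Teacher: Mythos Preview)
Your argument is correct and complete. The paper itself does not give a proof of this lemma: it simply cites the original source (\cite[p.~2236]{CicaleseGV18}), so there is no in-paper argument to compare against. The direct verification you outline---summing \eqref{eq:Q pmf} to~1, using the two defining inequalities of the floor in \eqref{eq:i index} to sandwich $Q(i+1)\in[p_{\min},\rho\,p_{\min}]$, and then splitting the majorization check at $k=i$ so that the $k\le i$ range uses $p_{\max}\le\rho\,p_{\min}$ while the $k\ge i+1$ range uses $P(j)\ge p_{\min}$---is exactly the natural route, and is in fact how Cicalese \emph{et al.}\ argue in the cited reference.

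Two cosmetic points. First, the sentence ``$Q(i+1)\le\rho\,p_{\min}$ rearranges to $(1-np_{\min})/((\rho-1)p_{\min})<i+1$'' is slightly imprecise: the rearrangement gives a non-strict inequality, and it is the floor property $\lfloor x\rfloor+1>x$ that supplies the strict one (which then implies what you need). Second, the formula \eqref{eq:i index} is ill-defined at $\rho=1$; in that boundary case $P$ is necessarily equiprobable, $Q=P$, and both items are trivial. You may wish to handle $\rho=1$ separately or note that the substantive content is for $\rho>1$, consistent with how the paper invokes the lemma in Lemma~\ref{lemma: min RE}.
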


\begin{proof}
See \cite[p.~2236]{CicaleseGV18} (top of the second column).
\end{proof}

\vspace*{0.1cm}
\begin{lemma}
\label{lemma: min RE}
Let $\rho>1$, $\alpha > 0$, and $n \geq 2$ be an integer. For
\begin{align}
\label{eq: beta's range}
\beta \in \left[ \frac1{1+(n-1)\rho}, \, \frac1n \right] := \Gamma_\rho^{(n)}
\end{align}
let $Q_\beta \in \set{P}_n(\rho)$ be defined on $\set{X} = \{1, \ldots, n\}$ as follows:
\begin{align}
\label{eq:Q_beta pmf}
Q_\beta(j) =
\begin{dcases}
\rho \beta, & \quad \mbox{$j \in \{1, \ldots, i_\beta\}$,}\\
1-(n + i_\beta \, \rho - i_\beta - 1) \beta, & \quad \mbox{$j=i_\beta+1$,}\\
\beta, &\quad \mbox{$j \in \{i_\beta+2, \ldots, n\}$}
\end{dcases}
\end{align}
where
\begin{align}
\label{eq:i_beta index}
i_\beta := \left\lfloor \frac{1-n\beta}{(\rho-1) \beta} \right\rfloor.
\end{align}
Then, for every $\alpha > 0$,
\begin{align}
\label{eq: simplified opt.}
\min_{P \in \set{P}_n(\rho)} H_{\alpha}(P) = \min_{\beta \in \Gamma_\rho^{(n)}} H_{\alpha}(Q_\beta).
\end{align}
\end{lemma}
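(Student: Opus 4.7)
The plan is to combine Lemma~\ref{lemma: Lemma 6 from CicaleseGV18} with the Schur-concavity of the R\'{e}nyi entropy (Proposition~\ref{proposition: Renyi entropy is Schur-concave}) in order to reduce the infinite-dimensional minimization over $\set{P}_n(\rho)$ to a one-dimensional minimization in the scalar parameter $\beta \in \Gamma_\rho^{(n)}$.

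For the direction ``$\geq$'' I would fix an arbitrary $P \in \set{P}_n(\rho)$, set $\beta := p_{\min}$, and first verify that $\beta$ necessarily lies in $\Gamma_\rho^{(n)}$. The upper bound $\beta \leq 1/n$ is simply the fact that the minimum positive mass cannot exceed the average $1/n$, while the lower bound $\beta \geq 1/\bigl(1+(n-1)\rho\bigr)$ follows from
\begin{align*}
1 = \sum_{j=1}^n P(j) \leq \beta + (n-1)\,p_{\max} \leq \bigl(1+(n-1)\rho\bigr)\beta,
\end{align*}
together with \eqref{eq: ratio max/min}. For this value of $\beta$, the distribution $Q$ supplied by Lemma~\ref{lemma: Lemma 6 from CicaleseGV18} coincides with the $Q_\beta$ of \eqref{eq:Q_beta pmf}--\eqref{eq:i_beta index}, so that lemma yields $P \prec Q_\beta$. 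Schur-concavity then gives $H_\alpha(P) \geq H_\alpha(Q_\beta) \geq \min_{\beta' \in \Gamma_\rho^{(n)}} H_\alpha(Q_{\beta'})$, and taking the minimum over $P$ establishes this direction.

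For the reverse direction ``$\leq$'', it suffices to show that $Q_\beta \in \set{P}_n(\rho)$ for every $\beta \in \Gamma_\rho^{(n)}$, as then each $Q_\beta$ is a feasible competitor on the left-hand side of \eqref{eq: simplified opt.}. This follows from Lemma~\ref{lemma: Lemma 6 from CicaleseGV18}(1): for every $\beta \in \Gamma_\rho^{(n)}$ one can exhibit some $P^\star \in \set{P}_n(\rho)$ whose minimum positive mass equals $\beta$ (for instance, a distribution with one mass $\beta$ together with $n-1$ masses in $[\beta, \rho\beta]$ that sum to $1-\beta$, whose existence is equivalent to $\beta \in \Gamma_\rho^{(n)}$), and applying Lemma~\ref{lemma: Lemma 6 from CicaleseGV18}(1) to this $P^\star$ certifies $Q_\beta \in \set{P}_n(\rho)$.

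The only genuinely nontrivial ingredient is the identification of the admissible range $\Gamma_\rho^{(n)}$ of $\beta = p_{\min}$; once that is in hand, Lemma~\ref{lemma: Lemma 6 from CicaleseGV18} and Proposition~\ref{proposition: Renyi entropy is Schur-concave} carry all of the remaining weight. The real analytic difficulty of Section~\ref{section:thm-RE} is deferred to the subsequent task of computing or sharpening the one-dimensional minimum $\min_{\beta \in \Gamma_\rho^{(n)}} H_\alpha(Q_\beta)$, rather than to the structural reduction performed here.
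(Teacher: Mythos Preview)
Your proposal is correct and follows essentially the same route as the paper's own proof in Appendix~\ref{appendix: proof of min RE lemma}: identify the admissible range $\Gamma_\rho^{(n)}$ of $\beta=p_{\min}$, invoke Lemma~\ref{lemma: Lemma 6 from CicaleseGV18} to obtain $P\prec Q_\beta$ and $Q_\beta\in\set{P}_n(\rho)$, and conclude via the Schur-concavity of $H_\alpha$. Your write-up is in fact slightly more explicit than the paper's, spelling out separately the ``$\geq$'' and ``$\leq$'' directions and giving a cleaner derivation of the lower endpoint of $\Gamma_\rho^{(n)}$ via $1\leq \beta+(n-1)\rho\beta$.
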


\begin{proof}
See Appendix~\ref{appendix: proof of min RE lemma}.
\end{proof}

\vspace*{0.1cm}
\begin{lemma}
\label{lemma: boundedness and monotonocity}
For $\rho > 1$ and $\alpha > 0$, let
\begin{align}
\label{eq: def. c}
c_{\alpha}^{(n)}(\rho) := \log n - \min_{P \in \set{P}_n(\rho)} H_{\alpha}(P), \quad n=2,3, \ldots
\end{align}
with $c_{\alpha}^{(1)}(\rho) := 0$. Then, for every $n \in \naturals$,
\begin{align}
& 0 \leq c_{\alpha}^{(n)}(\rho) \leq \log \rho, \label{eq: bounded} \\
& c_{\alpha}^{(n)}(\rho) \leq c_{\alpha}^{(2n)}(\rho), \label{eq: monotonicity}
\end{align}
and $c_{\alpha}^{(n)}(\rho)$ is monotonically increasing in $\alpha \in [0, \infty]$.
\end{lemma}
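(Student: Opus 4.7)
My plan is to handle the three claims in \eqref{eq: bounded}, \eqref{eq: monotonicity}, and the $\alpha$-monotonicity separately. Two ingredients do most of the work: the standard fact that $\alpha\mapsto H_\alpha(P)$ is nonincreasing on $[0,\infty]$ for every fixed $P$, and a simple ``symmetric doubling'' of probability masses. The $n=1$ case of every claim is immediate from $c_\alpha^{(1)}(\rho)=0$, so I restrict attention to $n\geq 2$, and I use the convention that every $P\in\set{P}_n(\rho)$ has all $n$ masses positive (so that the ratio $p_{\max}/p_{\min}\leq \rho$ constrains all of them).

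For \eqref{eq: bounded}, the lower bound $c_\alpha^{(n)}(\rho)\geq 0$ follows because the equiprobable distribution lies in $\set{P}_n(\rho)$ (since $\rho\geq 1$) and has R\'{e}nyi entropy $\log n$. For the upper bound, fix $P\in\set{P}_n(\rho)$: from $1=\sum_x P(x)\geq np_{\min}$ one gets $p_{\min}\leq 1/n$, so $p_{\max}\leq \rho p_{\min}\leq \rho/n$, whence $H_\infty(P)=\log(1/p_{\max})\geq \log(n/\rho)$. Since $H_\alpha(P)\geq H_\infty(P)$ for every $\alpha\in[0,\infty]$, minimizing over $P$ yields $c_\alpha^{(n)}(\rho)\leq \log n-\log(n/\rho)=\log\rho$.

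The doubling inequality \eqref{eq: monotonicity} is the substantive step. Since $\set{P}_n(\rho)$ is compact (on it $p_{\min}\geq 1/(n\rho)$), there exists a minimizer $P^\star\in\set{P}_n(\rho)$ of $H_\alpha$. From $P^\star$ I construct $P'\in\set{P}_{2n}$ by splitting each mass into two equal halves, $P'(2i-1)=P'(2i)=\tfrac12 P^\star(i)$ for $i\in\{1,\ldots,n\}$. The largest and smallest masses of $P'$ are half of those of $P^\star$, so the ratio $p_{\max}/p_{\min}$ is preserved and $P'\in\set{P}_{2n}(\rho)$. A one-line computation for $\alpha\in(0,1)\cup(1,\infty)$ uses
\begin{align*}
\sum_{j=1}^{2n}\bigl(P'(j)\bigr)^\alpha=2^{1-\alpha}\sum_{i=1}^{n}\bigl(P^\star(i)\bigr)^\alpha
\end{align*}
to yield $H_\alpha(P')=H_\alpha(P^\star)+\log 2$; the same identity extends by continuity (or direct verification) to $\alpha\in\{0,1,\infty\}$. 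Substituting into the definition of $c_\alpha^{(n)}(\rho)$ directly gives $c_\alpha^{(n)}(\rho)\leq c_\alpha^{(2n)}(\rho)$.

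For the $\alpha$-monotonicity, I invoke once more the monotonicity of $H_\alpha(P)$ in $\alpha$. For $\alpha_1<\alpha_2$ and a minimizer $P^\star_1$ of $H_{\alpha_1}$ over $\set{P}_n(\rho)$, the chain
\begin{align*}
\min_{P\in\set{P}_n(\rho)}H_{\alpha_2}(P)\leq H_{\alpha_2}(P^\star_1)\leq H_{\alpha_1}(P^\star_1)=\min_{P\in\set{P}_n(\rho)}H_{\alpha_1}(P)
\end{align*}
immediately gives $c_{\alpha_1}^{(n)}(\rho)\leq c_{\alpha_2}^{(n)}(\rho)$. I do not foresee real obstacles: the only nonroutine idea is the symmetric doubling used for \eqref{eq: monotonicity}, and everything else follows from elementary bounds coupled with the standard monotonicity of the R\'{e}nyi entropy in its order.
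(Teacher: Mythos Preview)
Your proposal is correct and follows essentially the same approach as the paper: the upper bound via $H_\alpha(P)\geq H_\infty(P)$ and $p_{\max}\leq\rho/n$, the doubling construction $P'(2i-1)=P'(2i)=\tfrac12 P^\star(i)$ for \eqref{eq: monotonicity}, and the order-monotonicity of the R\'{e}nyi entropy for the $\alpha$-monotonicity are exactly what the paper uses. The only cosmetic difference is that the paper first rewrites $c_\alpha^{(n)}(\rho)=\max_{P\in\set{P}_n(\rho)}D_\alpha(P\|\mathrm{U}_n)$ via the identity $D_\alpha(P\|\mathrm{U}_n)=\log n-H_\alpha(P)$ and then argues with $D_\alpha$ (invoking its monotonicity in $\alpha$), whereas you work directly with $H_\alpha$; the content is identical.
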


\begin{proof}
See Appendix~\ref{appendix: proof of boundedness and monotonocity}.
\end{proof}

\begin{lemma}
\label{lemma: c infinity}
For $\alpha > 0$ and $\rho > 1$, the limit
\begin{align}
\label{def: c inf.}
c_{\alpha}^{(\infty)}(\rho) & := \lim_{n \to \infty} c_{\alpha}^{(n)}(\rho)
\end{align}
exists, having the following properties:
\begin{enumerate}[a)]
\item \label{lemma: c infinity- 1)} If $\alpha \in (0,1) \cup (1, \infty)$, then
\begin{align}
\label{eq: c inf. - alpha neq 1}
c_{\alpha}^{(\infty)}(\rho) =
\frac1{\alpha-1} \, \log \left(1+\frac{1+\alpha \, (\rho-1) - \rho^\alpha}{(1-\alpha)(\rho-1)}\right)
- \frac{\alpha}{\alpha-1} \, \log \left(1+\frac{1+\alpha \, (\rho-1) - \rho^\alpha}{(1-\alpha)(\rho^\alpha-1)}\right),
\end{align}
and
\begin{align}
\label{eq: lim c inf. - alpha to inf.}
\lim_{\alpha \to \infty} c_{\alpha}^{(\infty)}(\rho) = \log \rho.
\end{align}
\item \label{lemma: c infinity- 2)} If $\alpha=1$, then
\begin{align}
\label{eq: c inf. for alpha=1}
c_1^{(\infty)}(\rho) = \lim_{\alpha \to 1} c_{\alpha}^{(\infty)}(\rho) = \frac{\rho \log \rho}{\rho-1}
- \log \left(\frac{\mathrm{e} \rho \log_{\mathrm{e}}\rho}{\rho-1} \right).
\end{align}
\item \label{lemma: c infinity- 3)} For all $\alpha > 0$,
\begin{align}
\label{eq: lim c inf. - rho to 1}
\lim_{\rho \downarrow 1} c_{\alpha}^{(\infty)}(\rho) = 0.
\end{align}
\item \label{lemma: c infinity- 4)} For every $n \in \naturals$, $\alpha > 0$ and $\rho \geq 1$,
\begin{align} \label{eq: tightened bound}
0 \leq c_{\alpha}^{(n)}(\rho) \leq c_{\alpha}^{(2n)}(\rho) \leq c_{\alpha}^{(\infty)}(\rho) \leq \log \rho.
\end{align}
\end{enumerate}
\end{lemma}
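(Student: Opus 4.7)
The plan is to use Lemma~\ref{lemma: min RE} to reduce $c_\alpha^{(n)}(\rho)$ to a one-parameter optimization, take $n \to \infty$, find the optimizer in closed form, and then read off the remaining properties from the resulting expression.

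Starting from $c_\alpha^{(n)}(\rho) = \log n - \min_{\beta \in \Gamma_\rho^{(n)}} H_\alpha(Q_\beta)$, I would parametrize $q := n\beta$, so that $q$ ranges over $\bigl[\tfrac{n}{1+(n-1)\rho},\,1\bigr]$, which converges to $[1/\rho,\,1]$. Expanding $\sum_j Q_\beta^\alpha(j)$ and multiplying by $n^{\alpha-1}$, the middle term $\bigl[1-(n + i_\beta(\rho-1) - 1)\beta\bigr]^\alpha$ contributes at most $(\rho q)^\alpha / n \to 0$, while the other two terms converge (using $i_\beta/n \to (1-q)/[(\rho-1)q]$) to
\begin{equation*}
g(q) := \frac{q^{\alpha-1}\bigl[(\rho^\alpha-1)(1-q) + (\rho-1)q\bigr]}{\rho - 1}.
\end{equation*}
This convergence is uniform in $q$ on compact subsets of $(1/\rho,\,1)$, so one may exchange $\lim_n \sup_q = \sup_q \lim_n$ and obtain $c_\alpha^{(\infty)}(\rho) = \sup_q \tfrac{1}{\alpha-1}\log g(q)$. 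Differentiating, the unique interior critical point is
\begin{equation*}
q^* = \frac{(\alpha-1)(\rho^\alpha-1)}{\alpha(\rho^\alpha-\rho)},
\end{equation*}
which lies in $(1/\rho,\,1)$ by Bernoulli's inequality (applied in the direction appropriate to whether $\alpha>1$ or $\alpha<1$). Since $g(1)=g(1/\rho)=1$, the extremum is interior. At $q^*$ the bracket simplifies to $(\rho^\alpha-1)/\alpha$, and algebraic rearrangement of the resulting $\log q^* + \tfrac{1}{\alpha-1}\log\bigl[(\rho^\alpha-1)/(\alpha(\rho-1))\bigr]$ (using $1 + [1+\alpha(\rho-1)-\rho^\alpha]/[(1-\alpha)(\rho-1)] = (\rho-\rho^\alpha)/[(1-\alpha)(\rho-1)]$ and the analogous identity for the second logarithm in~\eqref{eq: c inf. - alpha neq 1}) yields the claimed formula.

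The remaining assertions follow from short direct computations. For~\eqref{eq: lim c inf. - alpha to inf.}, since $H_\infty(P) = \log(1/p_{\max})$, the minimizer on $\mathcal{P}_n(\rho)$ saturates $p_{\max} = \rho/[1+(n-1)\rho]$, giving $c_\infty^{(n)}(\rho) = \log[n\rho/(1+(n-1)\rho)] \to \log\rho$. For~\eqref{eq: c inf. for alpha=1}, taking $\alpha \to 1$ in~\eqref{eq: c inf. - alpha neq 1} requires noting that the ratio of the two logarithm arguments equals $(\rho^\alpha-1)/[\alpha(\rho-1)] \to 1$; an L'Hopital step then reduces the expression to $K - 1 - \log K$ with $K = \rho \log_e\rho/(\rho-1)$, matching the claimed form. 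For~\eqref{eq: lim c inf. - rho to 1}, the Taylor expansion $1+\alpha(\rho-1)-\rho^\alpha = \tfrac{\alpha(1-\alpha)}{2}(\rho-1)^2 + O((\rho-1)^3)$ shows that both logarithm arguments tend to $1$ to first order, with cancellation of the first-order contributions in the difference. Finally, \eqref{eq: tightened bound} combines Lemma~\ref{lemma: boundedness and monotonocity} with existence of the limit above: the chain $c_\alpha^{(2n)} \leq c_\alpha^{(4n)} \leq \cdots \leq \log\rho$ is monotone and bounded, and its subsequential limit must equal $c_\alpha^{(\infty)}(\rho)$, whence $c_\alpha^{(2n)}(\rho) \leq c_\alpha^{(\infty)}(\rho) \leq \log\rho$.

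The main obstacle is the algebraic step that rewrites $\log g(q^*)/(\alpha-1)$ in the two-logarithm form of~\eqref{eq: c inf. - alpha neq 1}; signs must be tracked carefully, especially for $\alpha \in (0,1)$, where the factors $(\alpha-1)$ and $(\rho^\alpha-\rho)$ are both negative. A secondary technicality is justifying the uniform convergence needed to interchange $\sup_q$ and $\lim_n$, but this is routine once one has the $O(1/n)$ estimates for $i_\beta/n$ and for the middle-term contribution.
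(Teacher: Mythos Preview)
Your proof is correct and follows essentially the same route as the paper's: both observe that the single ``middle'' mass of $Q_\beta$ becomes negligible as $n\to\infty$, reduce to a one-parameter optimization whose objective equals~$1$ at both endpoints, and locate the unique interior critical point via Rolle/first-derivative analysis. The only cosmetic difference is your choice of parameter $q=n\beta$ versus the paper's $x=m/n$ (the fraction of large masses); these are related by $q=1/\bigl(1+x(\rho-1)\bigr)$, and your $g(q)$ coincides with the paper's $f_\alpha(x)=\bigl[1+(\rho^\alpha-1)x\bigr]/\bigl[1+(\rho-1)x\bigr]^\alpha$ under this substitution, so the two critical points and the resulting closed form agree.
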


\begin{proof}
See Appendix~\ref{appendix: proof of lemma c_infinity}.
\end{proof}

In view of Lemmata~\ref{lemma: Lemma 6 from CicaleseGV18}--\ref{lemma: c infinity}, we obtain the following
main result in this section:
\begin{theorem}
\label{theorem: c}
Let $\alpha>0$, $\rho>1$, $n \geq 2$, and let $c_{\alpha}^{(n)}(\rho)$ in \eqref{eq: def. c} designate
the maximal gap between the order-$\alpha$ R\'{e}nyi entropies of equiprobable and arbitrary distributions
in $\set{P}_n(\rho)$. Then,
\begin{enumerate}[a)]
\item The non-negative sequence $\{c_{\alpha}^{(n)}(\rho)\}_{n=2}^{\infty}$ can be calculated
by the real-valued single-parameter optimization in the right side of \eqref{eq: simplified opt.}.
\item The asymptotic limit as $n \to \infty$, denoted
by $c_{\alpha}^{(\infty)}(\rho)$, admits the closed-form expressions in \eqref{eq: c inf. - alpha neq 1}
and \eqref{eq: c inf. for alpha=1}, and it satisfies the properties in
\eqref{eq: lim c inf. - alpha to inf.}, \eqref{eq: lim c inf. - rho to 1} and \eqref{eq: tightened bound}.
\end{enumerate}
\end{theorem}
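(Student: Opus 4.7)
The plan is to observe that Theorem~\ref{theorem: c} is essentially a consolidation of Lemmata~\ref{lemma: Lemma 6 from CicaleseGV18}--\ref{lemma: c infinity}; once those are in hand, the theorem requires only that we identify which part of which lemma supplies each claim, so no fresh technical work is needed at the theorem level.

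For part~a), the reduction to a single-parameter optimization is supplied verbatim by \eqref{eq: simplified opt.} in Lemma~\ref{lemma: min RE}: the minimum of $H_{\alpha}(P)$ over $\set{P}_n(\rho)$ is attained on the one-parameter family $\{Q_\beta : \beta \in \Gamma_\rho^{(n)}\}$ defined in \eqref{eq:Q_beta pmf}, and therefore the $n$-th term of the sequence $\{c_\alpha^{(n)}(\rho)\}$ can be computed as $\log n - \min_{\beta \in \Gamma_\rho^{(n)}} H_\alpha(Q_\beta)$, which is a real-valued single-parameter optimization over the closed interval in \eqref{eq: beta's range}. Non-negativity of $c_\alpha^{(n)}(\rho)$ is the left inequality of \eqref{eq: bounded} in Lemma~\ref{lemma: boundedness and monotonocity}, itself reflecting the fact that, by the Schur-concavity of $H_\alpha$ (Proposition~\ref{proposition: Renyi entropy is Schur-concave}) together with majorization of the equiprobable distribution by any other element of $\set{P}_n$, the equiprobable distribution is the R\'enyi-entropy maximizer on $\set{P}_n$, so the gap in \eqref{eq: def. c} is always non-negative.

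For part~b), every assertion is already packaged inside Lemma~\ref{lemma: c infinity}: existence of the limit $c_\alpha^{(\infty)}(\rho)$ is the content of \eqref{def: c inf.} (justified by the monotonicity \eqref{eq: monotonicity} and the uniform upper bound $\log \rho$ in \eqref{eq: bounded}); the closed-form expression for $\alpha \in (0,1) \cup (1,\infty)$ is \eqref{eq: c inf. - alpha neq 1}; its continuous extension at $\alpha = 1$ is \eqref{eq: c inf. for alpha=1}; and the three asymptotic/limit properties \eqref{eq: lim c inf. - alpha to inf.}, \eqref{eq: lim c inf. - rho to 1} and the refined chain \eqref{eq: tightened bound} are stated explicitly there. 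Assembling these into the theorem statement is a purely organizational step.

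The genuine obstacle, therefore, is not in the proof of Theorem~\ref{theorem: c} itself but in establishing part~\ref{lemma: c infinity- 1)} of Lemma~\ref{lemma: c infinity}: one must carry out the outer minimization in $\beta$ of $H_\alpha(Q_\beta)$, locate the asymptotic optimizer as $n \to \infty$ (where the fraction $i_\beta/n$ converges to an interior value depending on $\alpha$ and $\rho$), and algebraically reduce the resulting expression --- most naturally through the auxiliary quantity $1 + \alpha(\rho-1) - \rho^\alpha$ appearing in \eqref{eq: c inf. - alpha neq 1} --- into the stated closed form. The present theorem only harvests the outcome of that analysis; hence my proof would simply cite the four lemmata in the order above and verify that each claim in the statement corresponds to one of their assertions.
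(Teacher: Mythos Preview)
Your proposal is correct and mirrors the paper exactly: the paper's ``proof'' of Theorem~\ref{theorem: c} is nothing more than the sentence ``In view of Lemmata~\ref{lemma: Lemma 6 from CicaleseGV18}--\ref{lemma: c infinity}, we obtain the following main result,'' so your identification of which lemma supplies which claim is precisely the intended argument. One minor caution: your parenthetical that the existence of the limit in \eqref{def: c inf.} is ``justified by the monotonicity \eqref{eq: monotonicity} and the uniform upper bound $\log \rho$ in \eqref{eq: bounded}'' is not quite right, since \eqref{eq: monotonicity} only gives monotonicity along the doubling subsequence $n \mapsto 2n$ and hence does not by itself force convergence of the full sequence; in the paper (Appendix~\ref{appendix: proof of lemma c_infinity}) the limit is obtained by direct asymptotic computation of the optimizer rather than by a monotone--bounded argument.
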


\begin{remark}
Setting $\alpha = 2$ in Theorem~\ref{theorem: c} gives that, for all $P \in \set{P}_n(\rho)$
(with $\rho > 1$, and an integer $n \geq 2$),
\begin{align}
H_2(P) & \geq \log n - c_2^{(n)}(\rho) \label{eq-1: Simic} \\
& \geq \log n - c_2^{(\infty)}(\rho) \label{eq0: Simic} \\
&= \log \frac{4 \rho n}{(1+\rho)^2} \label{eq: Simic}
\end{align}
where \eqref{eq-1: Simic}, \eqref{eq0: Simic} and \eqref{eq: Simic} hold, respectively,
due to \eqref{eq: def. c}, \eqref{eq: tightened bound} and \eqref{eq: c inf. - alpha neq 1}.
This strengthens the result in \cite[Proposition~2]{Simic09} which gives
the same lower bound as in the right side of \eqref{eq: Simic} for $H(P)$ rather
than for $H_2(P)$ (recall that $H(P) \geq H_2(P)$).
\end{remark}

For a numerical illustration of Theorem~\ref{theorem: c}, Figure~\ref{figure:c_alpha_plot1}
provides a plot of $c_{\alpha}^{(\infty)}(\rho)$ in \eqref{eq: c inf. - alpha neq 1} and
\eqref{eq: c inf. for alpha=1} as a function of $\rho \geq 1$, confirming numerically
the properties in \eqref{eq: lim c inf. - alpha to inf.} and \eqref{eq: lim c inf. - rho to 1}.
Furthermore, Figure~\ref{figure:c_alpha_plot2} provides plots of $c_{\alpha}^{(n)}(\rho)$ in
\eqref{eq: def. c} as a function of $\alpha > 0$, for $\rho=2$ (left plot) and $\rho=256$
(right plot), with several values of $n \geq 2$; the calculation of the curves in these plots
relies on \eqref{eq: simplified opt.}, \eqref{eq: c inf. - alpha neq 1} and
\eqref{eq: c inf. for alpha=1}, and they illustrate the monotonicity and boundedness properties
in \eqref{eq: tightened bound}.

\begin{figure}[h]
\vspace*{-4.5cm}
\centerline{\includegraphics[width=10.2cm]{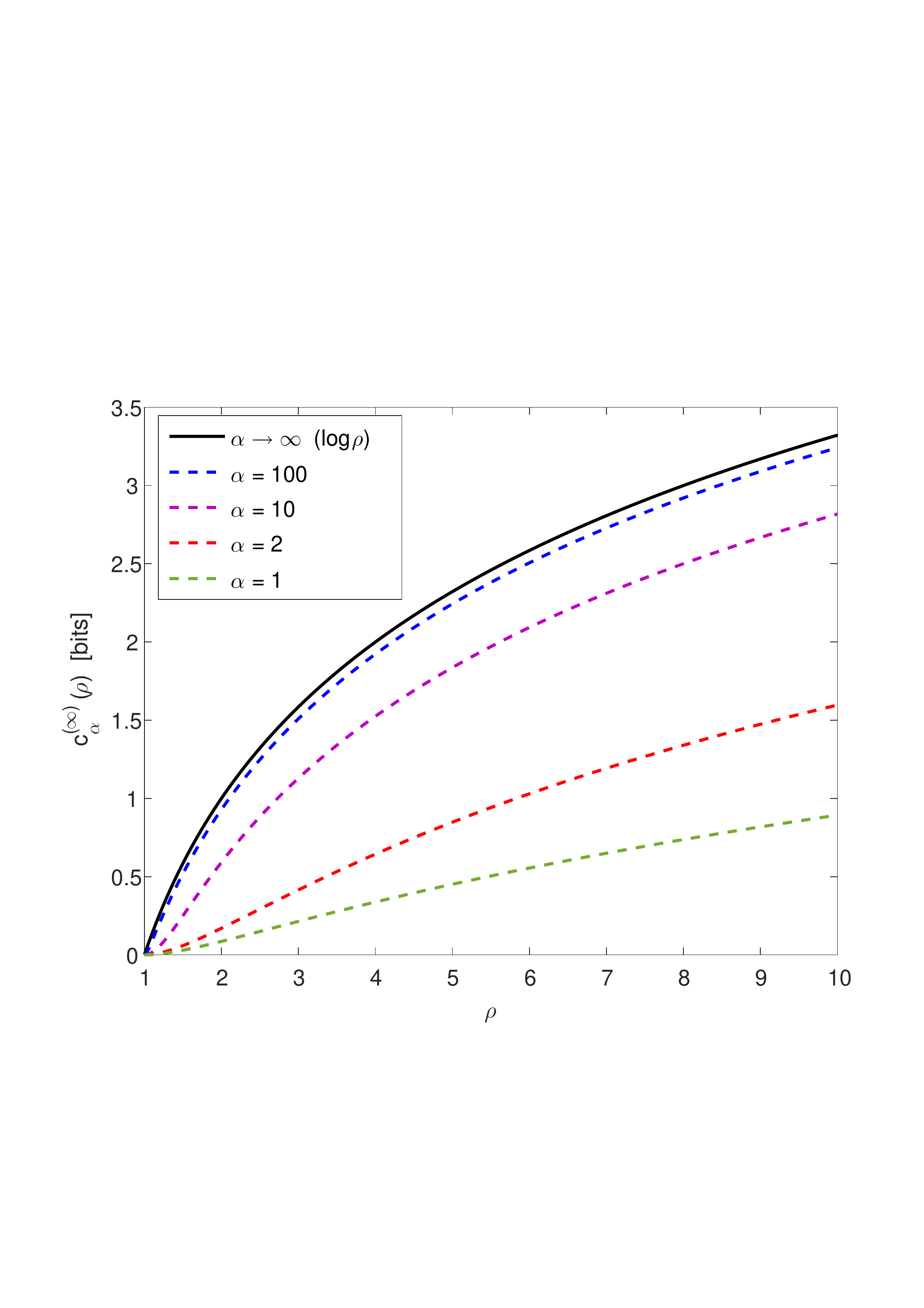}}
\vspace*{-3.7cm}
\caption{\label{figure:c_alpha_plot1}
A plot of $c_{\alpha}^{(\infty)}(\rho)$ in \eqref{eq: c inf. - alpha neq 1} and
\eqref{eq: c inf. for alpha=1} ($\log$ is on base~2) as a function of $\rho$, confirming numerically
the properties in \eqref{eq: lim c inf. - alpha to inf.} and \eqref{eq: lim c inf. - rho to 1}.}
\end{figure}

\begin{figure}[h]
\vspace*{-4.3cm}
\hspace*{-0.5cm}
\includegraphics[width=10.2cm]{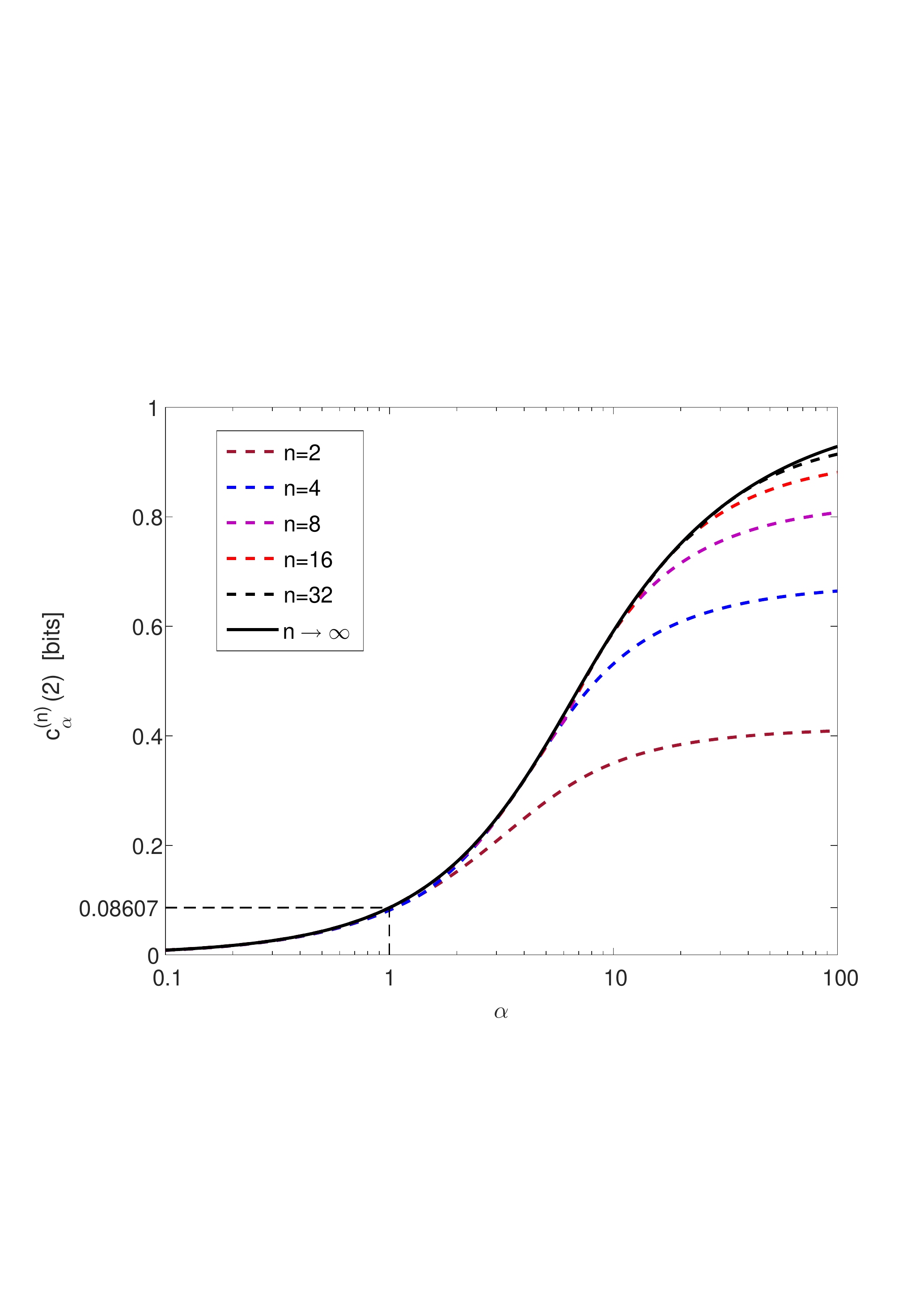} 
\hspace*{-1cm}
\includegraphics[width=10.2cm]{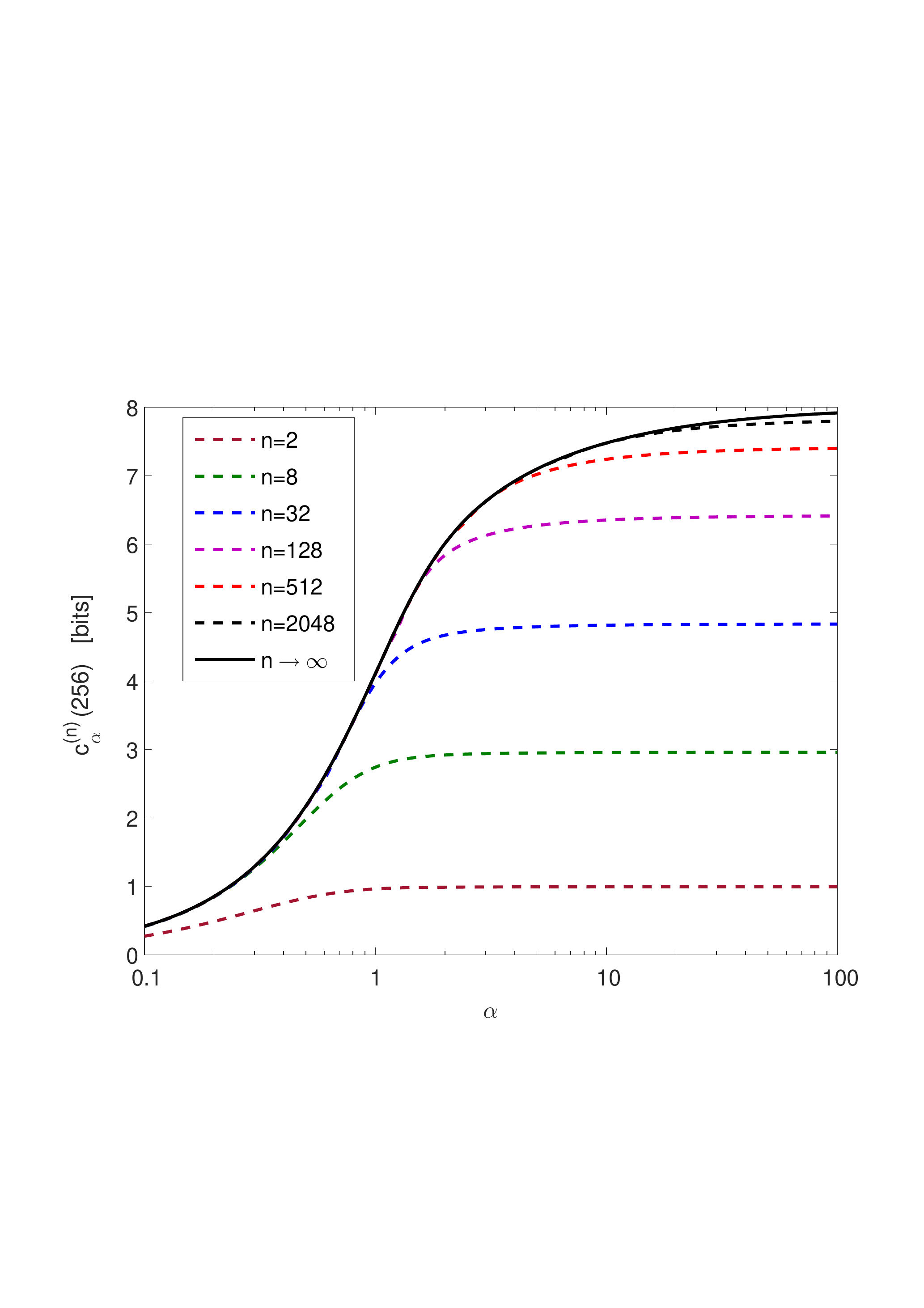}
\vspace*{-4.2cm}
\caption{\label{figure:c_alpha_plot2}
Plots of $c_{\alpha}^{(n)}(\rho)$ in \eqref{eq: def. c} ($\log$ is on base~2) as a function of $\alpha > 0$,
for $\rho=2$ (left plot) and $\rho=256$ (right plot), with several values of $n \geq 2$.}
\end{figure}

\begin{remark}
\label{remark: Th. 1 - entropy}
Theorem~\ref{theorem: c} strengthens the result in \cite[Theorem~2]{CicaleseGV18}
for the Shannon entropy (i.e., for $\alpha=1$), in addition to its generalization to
R\'{e}nyi entropies of arbitrary orders $\alpha>0$. This is because our lower bound
on the Shannon entropy is given by
\begin{align}
\label{eq: 20181112-a}
H(P) \geq \log n - c_1^{(n)}(\rho), \quad \forall \, P \in \set{P}_n(\rho),
\end{align}
whereas the looser bound in \cite{CicaleseGV18} is given by (see \cite[(7)]{CicaleseGV18}
and \eqref{eq: c inf. for alpha=1} here)
\begin{align}
\label{eq: 20181112-b}
H(P) \geq \log n - c_1^{(\infty)}(\rho), \quad \forall \, P \in \set{P}_n(\rho),
\end{align}
and we recall that $0 \leq c_1^{(n)}(\rho) \leq c_1^{(\infty)}(\rho)$ (see
\eqref{eq: tightened bound}). Figure~\ref{figure:improvement for Shannon entropy}
shows the improvement in the new lower bound \eqref{eq: 20181112-a} over \eqref{eq: 20181112-b}
by comparing $c_1^{(\infty)}(\rho)$ versus $c_1^{(n)}(\rho)$ for $\rho \in [1, 10^5]$
and with several values of $n$. It is reflected from
Figure~\ref{figure:improvement for Shannon entropy} that there is a very marginal
improvement in the lower bound on the Shannon entropy \eqref{eq: 20181112-a} over the
bound in \eqref{eq: 20181112-b} if $\rho \leq 30$ (even for small values of $n$), whereas
there is a significant improvement over the bound in \eqref{eq: 20181112-b} for large values
of $\rho$; by increasing the value of $n$, also the value of $\rho$
needs to be increased for observing an improvement of the lower bound in \eqref{eq: 20181112-a}
over \eqref{eq: 20181112-b} (see Figure~\ref{figure:improvement for Shannon entropy}).
\begin{figure}[h]
\vspace*{-4.5cm}
\centerline{\includegraphics[width=10.2cm]{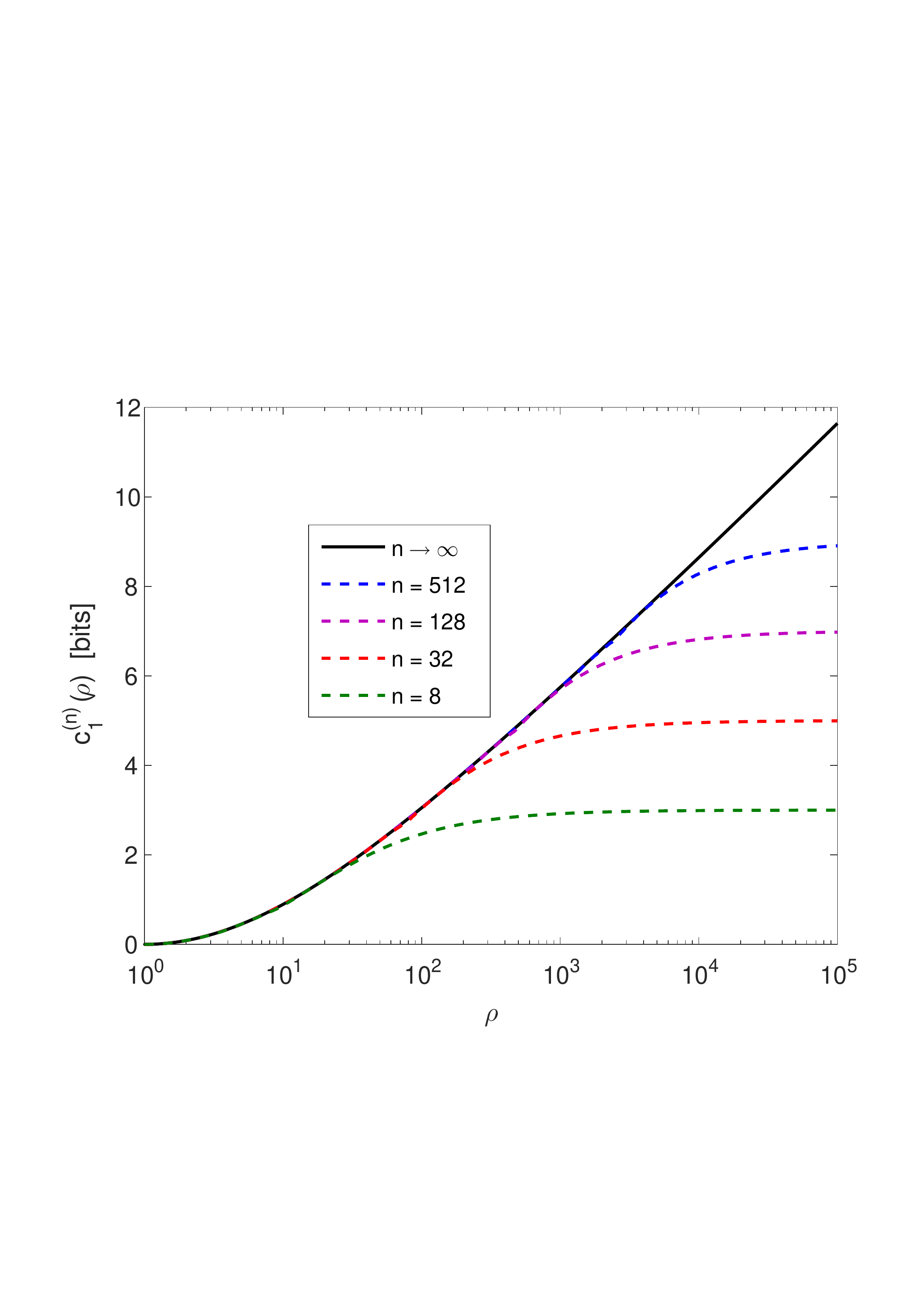}}
\vspace*{-3.7cm}
\caption{\label{figure:improvement for Shannon entropy}
A plot of $c_1^{(\infty)}(\rho)$ in \eqref{eq: c inf. for alpha=1}
versus $c_1^{(n)}(\rho)$ for finite $n$ ($n = 512, 128, 32$, and 8)
as a function of $\rho$.}
\end{figure}

An improvement of the bound in \eqref{eq: 20181112-a} over
\eqref{eq: 20181112-b} leads to a tightening of the upper bound in \cite[Theorem~4]{CicaleseGV18}
on the compression rate of Tunstall codes for discrete memoryless sources, which further tightens
the bound by Jelinek and Schneider in \cite[Eq.~(9)]{JelSc72}. More explicitly, in
view of \cite[Section~6]{CicaleseGV18}, an improved upper bound on the compression rate of these
variable-to-fixed lossless source codes is obtained by combining \cite[Eqs.~(36) and~(38)]{CicaleseGV18}
with a tightened lower bound on the entropy $H(W)$ of the leaves of the tree graph for Tunstall codes.
From \eqref{eq: 20181112-a}, the latter lower bound is given by $H(W) \geq \log_2 n - c_1^{(n)}(\rho)$
where $c_1^{(n)}(\rho)$ is expressed in bits, $\rho := \frac1{p_{\min}}$ is the reciprocal of the
minimal positive probability of the source symbols, and $n$ is the number of codewords (so, all codewords
are of length $\lceil \log_2 n \rceil$ bits). This yields a reduction in the upper bound on the
non-asymptotic compression rate $R$ of Tunstall codes from $\frac{\lceil \log_2 n \rceil \, H(X)}{\log_2 n - c_1^{(\infty)}(\rho)}$
(see \cite[Eq.~(40)]{CicaleseGV18} and \eqref{eq: c inf. for alpha=1}) to
$\frac{\lceil \log_2 n \rceil \, H(X)}{\log_2 n - c_1^{(n)}(\rho)}$ bits per source symbol where $H(X)$
denotes the source entropy (converging, in view of \eqref{eq: bounded}, to $H(X)$ as we let $n \rightarrow \infty$).
\end{remark}

\begin{remark}
Equality \eqref{eq: simplified opt.} with the minimizing probability mass function
of the form \eqref{eq:Q_beta pmf} holds, in general, by replacing the R\'{e}nyi entropy
with an arbitrary Schur-concave function (as it can be easily verified from the proof
of Lemma~\ref{lemma: min RE} in Appendix~\ref{appendix: proof of min RE lemma}). However,
the analysis leading to Lemmata~\ref{lemma: boundedness and monotonocity}--\ref{lemma: c infinity}
and Theorem~\ref{theorem: c} applies particularly to the R\'{e}nyi entropy.
\end{remark}

\section{Bounds on the R\'{e}nyi Entropy of a Function of a Discrete Random Variable}
\label{section: Function of RV}

This section relies on Theorem~\ref{theorem: c} and majorization for
extending \cite[Theorem~1]{CicaleseGV18}, which applies to the Shannon entropy, to
R\'{e}nyi entropies of any positive order. More explicitly, let $\alpha \in (0, \infty)$ and
\begin{itemize}
\item $\set{X}$ and $\set{Y}$ be finite sets of cardinalities $|\set{X}|=n$ and $|\set{Y}|=m$
with $n>m \geq 2$; without any loss of generality, let $\set{X} = \{1, \ldots, n\}$ and
$\set{Y} = \{1, \ldots, m\}$;
\item $X$ be a random variable taking values on $\set{X}$ with a probability mass function
$P_X \in \set{P}_n$;
\item $\set{F}_{n,m}$ be the set of deterministic functions $f \colon \set{X} \to \set{Y}$;
note that $f \in \set{F}_{n,m}$ is not one-to-one since $m<n$.
\end{itemize}

The main result in this section sharpens the inequality $H_{\alpha}\bigl(f(X)\bigr) \leq H_{\alpha}(X)$,
for every deterministic function $f \in \set{F}_{n,m}$ with $n>m \geq 2$ and $\alpha > 0$, by obtaining
non-trivial upper and lower bounds on $\underset{f \in \set{F}_{n,m}}{\max} H_{\alpha}\bigl( f(X) \bigr)$.
The calculation of the exact value of $\underset{f \in \set{F}_{n,m}}{\min} H_{\alpha}\bigl( f(X) \bigr)$
is much easier, and it is expressed in closed form by capitalizing on the Schur-concavity of the R\'enyi entropy.

The following main result extends \cite[Theorem~1]{CicaleseGV18} to R\'{e}nyi entropies of arbitrary positive orders.
\begin{theorem}
\label{thm:Huffman}
Let $X \in \{1, \ldots, n\}$ be a random variable which satisfies $P_X(1) \geq P_X(2) \geq \ldots \geq P_X(n)$.
\begin{enumerate}[a)]
\item
For $m \in \{2, \ldots, n-1\}$, if $P_X(1) < \frac1m$, let $\widetilde{X}_m$ be the equiprobable
random variable on $\{1, \ldots, m\}$; otherwise, if $P_X(1) \geq \frac1m$,
let $\widetilde{X}_m \in \{1, \ldots, m\}$ be a random variable with the probability mass function
\begin{align}
\label{eq:P_Xm}
P_{\widetilde{X}_m}(i) =
\begin{dcases}
P_X(i), & i \in \{1, \ldots, n^\ast\}, \\
\frac1{m-n^\ast} \sum_{j = n^\ast+1}^n P_X(j), & i \in \{n^\ast+1, \ldots, m\},
\end{dcases}
\end{align}
where $n^\ast$ is the maximal integer $i \in \{1, \ldots, m-1\}$ such that
\begin{align}
\label{eq: n ast}
P_X(i) \geq \frac1{m-i} \sum_{j=i+1}^n P_X(j).
\end{align}
Then, for every $\alpha > 0$,
\begin{align}
\label{eq:bounds on RE of f(X)}
\underset{f \in \set{F}_{n,m}}{\max} H_{\alpha}\bigl( f(X) \bigr)
\in \bigl[ H_\alpha(\widetilde{X}_m) - v(\alpha), \, H_\alpha(\widetilde{X}_m) \bigr],
\end{align}
where
\begin{align}
\label{eq:v}
v(\alpha) := c_\alpha^{(\infty)}(2) =
\begin{dcases}
\log \left( \frac{\alpha-1}{2^\alpha-2} \right) - \frac{\alpha}{\alpha-1} \,
\log \left(\frac{\alpha}{2^\alpha-1}\right), & \alpha \neq 1, \\[0.15cm]
\log \left(\frac{2}{e \, \ln 2}\right) \approx 0.08607 \, \text{bits}, & \alpha=1.
\end{dcases}
\end{align}
\item
There exists an explicit construction of a deterministic function
$f^\ast \in \set{F}_{n,m}$ such that
\begin{align}
\label{eq: construction}
H_{\alpha}\bigl( f^\ast(X) \bigr) \in \bigl[ H_\alpha(\widetilde{X}_m) - v(\alpha), \, H_\alpha(\widetilde{X}_m)\bigr]
\end{align}
where $f^\ast$ is independent of $\alpha$, and it is obtained by using
Huffman coding (as in \cite{CicaleseGV18} for $\alpha=1$).
\item
Let $\widetilde{Y}_m \in \{1, \ldots, m\}$ be a random variable with the probability mass function
\begin{align}
\label{eq:P_Ym}
P_{\widetilde{Y}_m}(i) =
\begin{dcases}
\sum_{k=1}^{n-m+1} P_X(k), & i=1, \\
P_X(n-m+i), & i \in \{2, \ldots, m\}.
\end{dcases}
\end{align}
Then, for every $\alpha > 0$,
\begin{align}
\label{eq:min H_alpha}
\underset{f \in \set{F}_{n,m}}{\min} H_{\alpha}\bigl( f(X) \bigr) = H_{\alpha}(\widetilde{Y}_m).
\end{align}
\end{enumerate}
\end{theorem}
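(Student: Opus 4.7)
The three parts decouple. Parts (c) and the upper bound of part (a) reduce to majorization comparisons of $P_{f(X)}$ against the extremal distributions $P_{\widetilde{Y}_m}$ and $P_{\widetilde{X}_m}$, followed by the Schur-concavity of $H_\alpha$ (Proposition~\ref{proposition: Renyi entropy is Schur-concave}). The new technical content is confined to part (b), which combines an explicit Huffman-based construction of $f^\ast$ with Theorem~\ref{theorem: c} specialised at $\rho=2$.

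\textbf{Part (c) and the upper bound of part (a).} For part (c) (where one may assume $f$ surjective without loss of generality, since shrinking the range only decreases entropy), let $q_1\ge\dots\ge q_m$ denote the sorted bin sums. Each of the bottom $m-k$ bins contains at least one atom, so the top $k$ bins jointly contain at most $n-m+k$ atoms, giving
\begin{equation*}
\sum_{i=1}^k q_i \;\le\; \sum_{j=1}^{n-m+k} P_X(j) \;=\; \sum_{i=1}^k P_{\widetilde{Y}_m}(i),
\end{equation*}
where the last equality unfolds \eqref{eq:P_Ym}. Hence $P_{f(X)}\prec P_{\widetilde{Y}_m}$; Schur-concavity yields \eqref{eq:min H_alpha}, with the minimum attained by the $f$ that realises \eqref{eq:P_Ym}. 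For the upper bound of part (a), the dual majorization $P_{\widetilde{X}_m}\prec P_{f(X)}$ is exactly what is proven for the Shannon case in \cite[Theorem~1]{CicaleseGV18}; that argument is a partial-sum computation on sorted probability vectors and is not tied to Shannon entropy, so it transfers verbatim. Schur-concavity of $H_\alpha$ then produces $\max_f H_\alpha(f(X)) \le H_\alpha(\widetilde{X}_m)$ for every $\alpha>0$.

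\textbf{Part (b) and the lower bound of part (a).} Mirroring \cite{CicaleseGV18}, construct $f^\ast$ in two stages: first isolate $P_X(1),\dots,P_X(n^\ast)$ into singleton bins; second, run Huffman merging on $P_X(n^\ast+1),\dots,P_X(n)$ until exactly $m-n^\ast$ super-nodes remain, and assign each to one of the remaining bins. The standard Huffman property used in \cite{CicaleseGV18} guarantees that the resulting super-node weights $q_1,\dots,q_{m-n^\ast}$ satisfy $q_{\max}/q_{\min}\le 2$. Set $s=\sum_{j>n^\ast}P_X(j)$ and $Q'=(q_j/s)_j\in\set{P}_{m-n^\ast}(2)$. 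Theorem~\ref{theorem: c} applied with $\rho=2$ yields
\begin{equation*}
\log(m-n^\ast) - H_\alpha(Q') \;\le\; c_\alpha^{(m-n^\ast)}(2) \;\le\; c_\alpha^{(\infty)}(2) \;=\; v(\alpha).
\end{equation*}
Since $\widetilde{X}_m$ and $f^\ast(X)$ share their top $n^\ast$ masses, unfolding \eqref{eq: Renyi entropy} for $\alpha\ne 1$ writes the gap as
\begin{equation*}
H_\alpha(\widetilde{X}_m) - H_\alpha(f^\ast(X)) \;=\; \tfrac{1}{1-\alpha}\,\log\tfrac{A+B_1}{A+B_2},
\end{equation*}
with $A=\sum_{i=1}^{n^\ast}P_X(i)^\alpha$, $B_1=s^\alpha(m-n^\ast)^{1-\alpha}$ (the uniform value), and $B_2=\sum_j q_j^\alpha$. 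A sign-sensitive check bounds this by $\tfrac{1}{1-\alpha}\log\tfrac{B_1}{B_2} = \log(m-n^\ast)-H_\alpha(Q')$: for $\alpha<1$ concavity of $x^\alpha$ gives $B_1\ge B_2$ and $\tfrac{A+B_1}{A+B_2}\le\tfrac{B_1}{B_2}$, which composes with the positive prefactor $\tfrac{1}{1-\alpha}$; for $\alpha>1$ both inequalities reverse and compose with the negative prefactor to give the same bound. The case $\alpha=1$ follows by continuity and recovers the Shannon result of \cite{CicaleseGV18}.

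\textbf{Main obstacle.} The main delicacy is the sign-sensitive monotonicity step in part (b), where one must show that the common mass $A$ in numerator and denominator of $\tfrac{A+B_1}{A+B_2}$ only tightens the bound, in both regimes $\alpha<1$ and $\alpha>1$. The Huffman $q_{\max}/q_{\min}\le 2$ property and the partial-sum majorization used for the upper bound of part (a) are imported directly from \cite{CicaleseGV18}, whereas the new Theorem~\ref{theorem: c} of the present paper is what upgrades the residual bound from the single Shannon constant to the full $\alpha$-dependent $v(\alpha)$.
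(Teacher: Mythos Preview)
Your outline for part (c) and the upper half of (a) matches the paper exactly: both reduce to the majorizations $P_{f(X)}\prec P_{\widetilde{Y}_m}$ and $P_{\widetilde{X}_m}\prec P_{f(X)}$ (imported from \cite{CicaleseGV18}) together with the Schur-concavity of $H_\alpha$.

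For part (b) and the lower half of (a), your route genuinely differs from the paper's. The paper runs Huffman on \emph{all} of $P_X$, lets $i$ be the length of the untouched top prefix of the resulting $Q$ (this $i$ need not equal $n^\ast$), applies Theorem~\ref{theorem: c} with $\rho=2$ to the normalised tail $(Q(i+1),\dots,Q(m))$, and then passes through an intermediate distribution $Q^\ast$ (the top $i$ masses of $Q$ followed by a uniform tail) via the majorization $Q^\ast\prec P_{\widetilde{X}_m}$ of \cite[(31)]{CicaleseGV18}. You instead split at $n^\ast$ from the outset and compare directly to $\widetilde{X}_m$ through the $(A,B_1,B_2)$ inequality; this is cleaner and bypasses the intermediate $Q^\ast$ step entirely. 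Your sign-sensitive monotonicity argument is correct in both regimes.

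There is, however, one step you have not justified. The Huffman property recorded in \cite{CicaleseGV18} (and quoted as \eqref{eq: factor 2} in the paper) only says that the largest \emph{merged} output mass is at most twice the smallest output mass; it says nothing about unmerged survivors among the tail. For your claim $q_{\max}/q_{\min}\le 2$ to follow, you must also argue that $q_{\max}$ is necessarily a merged mass. This is precisely where the maximality of $n^\ast$ enters: since $n^\ast+1$ fails \eqref{eq: n ast}, one has $P_X(n^\ast+1)<s/(m-n^\ast)$, so every unmerged survivor lies strictly below the average output mass and therefore cannot be $q_{\max}$. (The same pigeonhole argument, applied at each stage of the full Huffman run, shows that the top $n^\ast$ masses are never touched there either, so your ``isolate-first'' description in fact produces the same $f^\ast$ as the paper's construction---which is worth noting, since \cite{CicaleseGV18} describes the latter, not the former.) With this short addition your argument is complete and somewhat more direct than the paper's.
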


\begin{remark}
Setting $\alpha=1$ specializes Theorem~\ref{thm:Huffman} to \cite[Theorem~1]{CicaleseGV18} (with regard to
the Shannon entropy). This point is further elaborated in Remark~\ref{remark: Th. 2}, after the proof of
Theorem~\ref{thm:Huffman}.
\end{remark}

\begin{remark}
Similarly to \cite[Lemma~1]{CicaleseGV18}, an exact solution of the maximization problem in the left side of
\eqref{eq:bounds on RE of f(X)} is strongly NP-hard \cite{GareyJ79}; this means that, unless
$\text{P}=\text{NP}$, there is no polynomial time algorithm which, for an arbitrarily small
$\varepsilon > 0$, computes an admissible deterministic function $f_{\varepsilon} \in \set{F}_{n,m}$
such that
\begin{align}
H_{\alpha}\bigl(f_{\varepsilon}(X)\bigr) \geq (1-\varepsilon) \underset{f \in \set{F}_{n,m}}{\max} H_{\alpha}\bigl( f(X) \bigr).
\end{align}
This motivates the derivation of the bounds in \eqref{eq:bounds on RE of f(X)}, and the simple
construction of a deterministic function $f^\ast \in \set{F}_{n,m}$ achieving \eqref{eq: construction}.
\end{remark}

A proof of Theorem~\ref{thm:Huffman} relies on the following lemmata.
\begin{lemma}
\label{lemma5}
Let $X \in \{1, \ldots, n\}$, $m<n$ and $\alpha > 0$. Then,
\begin{align}
\label{eq:lemma5}
\underset{Q \in \set{P}_m: \; P_X \prec Q}{\max} \; H_{\alpha}(Q) = H_{\alpha}(\widetilde{X}_m)
\end{align}
where the probability mass function of $\widetilde{X}_m$ is given in \eqref{eq:P_Xm}.
\end{lemma}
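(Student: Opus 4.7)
The plan is to exploit Schur-concavity of $H_\alpha$ (Proposition~\ref{proposition: Renyi entropy is Schur-concave}) by showing that $\widetilde{X}_m$ is a \emph{majorization-minimum} of the feasible set $\mathcal{A} := \{Q \in \set{P}_m : P_X \prec Q\}$. Concretely, I would prove two things: (i) $\widetilde{X}_m \in \mathcal{A}$, i.e.\ $P_X \prec \widetilde{X}_m$; and (ii) for every $Q \in \mathcal{A}$, $\widetilde{X}_m \prec Q$. Together with Schur-concavity, (ii) yields $H_\alpha(Q) \leq H_\alpha(\widetilde{X}_m)$ for every $Q \in \mathcal{A}$, while (i) shows that this upper bound is attained, which is exactly \eqref{eq:lemma5}.

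The case $P_X(1) < 1/m$ is easy: $\widetilde{X}_m$ is the equiprobable distribution on $m$ atoms, so $H_\alpha(\widetilde{X}_m) = \log m$ is an absolute maximum over $\set{P}_m$; moreover, $G_{P_X}(k) \leq k\,P_X(1) < k/m = G_{\widetilde{X}_m}(k)$ for $k<m$, giving feasibility. For $P_X(1) \geq 1/m$, I would first check that the index $n^\ast$ in \eqref{eq: n ast} exists: evaluating \eqref{eq: n ast} at $i=1$ reduces to $mP_X(1)\geq 1$, which is the standing hypothesis. Then I verify that $P_{\widetilde{X}_m}$ is non-increasing by noting that the step from $i=n^\ast$ to $i=n^\ast+1$ is exactly the inequality in \eqref{eq: n ast} at $i=n^\ast$. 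For feasibility $P_X\prec\widetilde{X}_m$, on $k\leq n^\ast$ this is an equality; on $n^\ast<k<m$ I set $a_j := P_X(n^\ast+j)$, $S := \sum_{j} a_j$, $M := m-n^\ast$, and use that the maximality of $n^\ast$ forces $a_1 \leq S/M$ (since \eqref{eq: n ast} fails at $i=n^\ast+1$), so monotonicity of $(a_j)$ gives $\sum_{j=1}^{k-n^\ast} a_j \leq (k-n^\ast)\,a_1 \leq (k-n^\ast)\,S/M$, which is precisely $G_{P_X}(k)\leq G_{\widetilde{X}_m}(k)$.

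The main obstacle is step (ii), the majorization-minimality of $\widetilde{X}_m$ inside $\mathcal{A}$. Fix any $Q\in\mathcal{A}$ with masses $q_1\geq\cdots\geq q_m$ (padded to length $n$ by zeros). For $k\leq n^\ast$ one has $G_{\widetilde{X}_m}(k) = G_{P_X}(k) \leq G_Q(k)$ directly from $P_X\prec Q$. For $n^\ast<k<m$ the key observation is that among the $M=m-n^\ast$ remaining masses $q_{n^\ast+1}\geq\cdots\geq q_m$, whose total is $1-G_Q(n^\ast)$, the top $k-n^\ast$ of them dominate their proportional share:
\begin{align*}
G_Q(k) \;=\; G_Q(n^\ast) + \sum_{i=n^\ast+1}^{k} q_i
\;\geq\; G_Q(n^\ast) + \frac{k-n^\ast}{M}\,\bigl(1-G_Q(n^\ast)\bigr).
\end{align*}
The right side is an affine function of $G_Q(n^\ast)$ with slope $1-(k-n^\ast)/M \geq 0$ for $k\leq m$, hence it is non-decreasing in $G_Q(n^\ast)$; substituting the lower bound $G_Q(n^\ast)\geq G_{P_X}(n^\ast)$ (again from $P_X\prec Q$) and using $1-G_{P_X}(n^\ast)=S$ gives $G_Q(k) \geq G_{P_X}(n^\ast) + (k-n^\ast)\,S/M = G_{\widetilde{X}_m}(k)$, as required.

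Combining (i) and (ii) with Proposition~\ref{proposition: Renyi entropy is Schur-concave} finishes the proof: $H_\alpha(Q)\leq H_\alpha(\widetilde{X}_m)$ for all $Q\in\mathcal{A}$, with equality at $Q=\widetilde{X}_m$, which establishes \eqref{eq:lemma5} for every $\alpha>0$.
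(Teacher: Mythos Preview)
Your proposal is correct and follows exactly the same strategy as the paper: show that $P_{\widetilde{X}_m}$ is feasible ($P_X \prec P_{\widetilde{X}_m}$) and is the majorization-minimum of $\{Q \in \set{P}_m : P_X \prec Q\}$ ($P_{\widetilde{X}_m} \prec Q$ for all such $Q$), then invoke Schur-concavity of $H_\alpha$. The only difference is that the paper outsources these two majorization facts to \cite[Lemmas~2 and~4]{CicaleseGV18}, whereas you supply self-contained proofs of them.
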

\begin{proof}
Since $P_X \prec P_{\widetilde{X}_m}$ (see \cite[Lemma~2]{CicaleseGV18}) with $P_{\widetilde{X}_m} \in \set{P}_m$,
and $P_{\widetilde{X}_m} \prec Q$ for all $Q \in \set{P}_m$ such that $P_X \prec Q$ (see \cite[Lemma~4]{CicaleseGV18}),
the result follows from the Schur-concavity of the R\'{e}nyi entropy.
\end{proof}

\begin{lemma}
\label{lemma6}
Let $X \in \{1, \ldots, n\}$, $\alpha > 0$, and $f \in \set{F}_{n,m}$ with $m<n$. Then,
\begin{align}
\label{eq:lemma6}
H_{\alpha}\bigl( f(X) \bigr) \leq H_{\alpha}(\widetilde{X}_m).
\end{align}
\end{lemma}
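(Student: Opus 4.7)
The plan is to reduce Lemma~\ref{lemma6} directly to Lemma~\ref{lemma5}. All that is really needed is the standard \emph{merging} (or \emph{grouping}) property of majorization: any deterministic function $f\in\set{F}_{n,m}$ produces an output distribution $P_{f(X)}\in\set{P}_m$ that majorizes $P_X$. Once this is in hand, Lemma~\ref{lemma5} applied with $Q=P_{f(X)}$ (which satisfies $P_X\prec Q$ and lies in $\set{P}_m$) immediately yields $H_\alpha(f(X))=H_\alpha(P_{f(X)})\leq H_\alpha(\widetilde{X}_m)$, which is \eqref{eq:lemma6}.

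The bulk of the work is therefore to establish $P_X\prec P_{f(X)}$. I would argue as follows. Write $A_j:=f^{-1}(j)\subseteq\{1,\ldots,n\}$ for $j\in\{1,\ldots,m\}$, so that the sets $A_1,\ldots,A_m$ partition $\{1,\ldots,n\}$ and $P_{f(X)}(j)=\sum_{i\in A_j}P_X(i)$. Assume, without loss of generality, that $P_X(1)\geq P_X(2)\geq\ldots\geq P_X(n)$. For every $k\in\{1,\ldots,m-1\}$, define $S_k:=\{f(1),f(2),\ldots,f(k)\}$, which satisfies $|S_k|\leq k$ and $\{1,\ldots,k\}\subseteq \bigcup_{j\in S_k}A_j$. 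Hence
\begin{align*}
\sum_{j\in S_k} P_{f(X)}(j)=\sum_{i\in\bigcup_{j\in S_k}A_j} P_X(i)\geq \sum_{i=1}^k P_X(i)=G_{P_X}(k).
\end{align*}
If $|S_k|<k$, augment $S_k$ by arbitrary indices (possibly corresponding to empty preimages) until $|S_k|=k$; the left-hand sum only grows. The resulting sum is no larger than the sum of the $k$ largest masses of $P_{f(X)}$, so $G_{P_{f(X)}}(k)\geq G_{P_X}(k)$. For $k\geq m$, after padding $P_{f(X)}$ with zeros as in Definition~\ref{definition: majorization}, both cumulative sums equal $1$. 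This verifies $P_X\prec P_{f(X)}$.

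With the merging property established, combine it with Lemma~\ref{lemma5} as above to conclude \eqref{eq:lemma6}. I do not expect any real obstacle here: the grouping property of majorization is classical (see, e.g., \cite{MarshallOA}), and the rest is a single invocation of a lemma already proved in the paper. The only small care needed is the convention for padding with zeros when comparing distributions on sets of different sizes, which is explicitly allowed by Definition~\ref{definition: majorization}.
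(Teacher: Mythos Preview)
Your proposal is correct and follows the same approach as the paper: the paper proves Lemma~\ref{lemma6} by noting that $P_{f(X)}\in\set{P}_m$ majorizes $P_X$ (citing \cite[Lemma~3]{CicaleseGV18}) and then invoking Lemma~\ref{lemma5}. The only difference is that you supply a direct proof of the grouping/merging majorization property instead of citing it.
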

\begin{proof}
Since $f$ is a deterministic function in $\set{F}_{n,m}$ with $m<n$,
the probability mass function of $f(X)$ is an element in $\set{P}_m$
which majorizes $P_X$ (see \cite[Lemma~3]{CicaleseGV18}).
Inequality \eqref{eq:lemma6} then follows from Lemma~\ref{lemma5}.
\end{proof}

We are now ready to prove Theorem~\ref{thm:Huffman}.

\begin{proof}
In view of \eqref{eq:lemma6},
\begin{align}
\label{eq:th2-ub}
\underset{f \in \set{F}_{n,m}}{\max} H_{\alpha}\bigl(f(X)\bigr) \leq H_{\alpha}(\widetilde{X}_m).
\end{align}
We next construct a function $f^\ast \in \set{F}_{n,m}$ such that, for all $\alpha > 0$,
\begin{align}
H_{\alpha}\bigl(f^\ast(X)\bigr) & \geq \underset{Q \in \set{P}_m: \;
P_X \prec Q}{\max} \; H_{\alpha}(Q) - v(\alpha) \label{eq:th2-a} \\
& \geq \underset{f \in \set{F}_{n,m}}{\max} H_{\alpha}\bigl(f(X)\bigr) - v(\alpha) \label{eq:th2-b}
\end{align}
where the function $v \colon (0, \infty) \to (0, \infty)$ in the right side of \eqref{eq:th2-a} is
given in \eqref{eq:v}, and \eqref{eq:th2-b} holds due to \eqref{eq:lemma5} and \eqref{eq:th2-ub}.
The function $f^\ast$ in our proof coincides with the construction in \cite{CicaleseGV18},
and it is therefore independent of $\alpha$.

We first review and follow the concept of the proof of \cite[Lemma~5]{CicaleseGV18},
and we then deviate from the analysis there for proving our result. The idea behind
the proof of \cite[Lemma~5]{CicaleseGV18} relies on the following algorithm:
\begin{enumerate}[1)]
\item Start from the probability mass function $P_X \in \set{P}_n$ with $P_X(1) \geq \ldots \geq P_X(n)$;
\item Merge successively pairs of probability masses by applying the Huffman algorithm;
\item Stop the process in Step~2 when a probability mass function
$Q \in \set{P}_m$ is obtained (with $Q(1) \geq \ldots \geq Q(m)$);
\item Construct the deterministic function $f^\ast \in \set{F}_{n,m}$
by setting $f^\ast(k)=j \in \{1, \ldots, m\}$ for all probability masses $P_X(k)$,
with $k \in \{1, \ldots, n\}$, being merged in Steps~2--3 into the node of $Q(j)$.
\end{enumerate}
Let $i \in \{0, \ldots, m-1\}$ be the largest index such that $P_X(1)=Q(1), \ldots, P_X(i)=Q(i)$
(note that $i=0$ corresponds to the case where each node $Q(j)$, with $j \in \{1, \ldots, m\}$,
is constructed by merging at least two masses of the probability mass function $P_X$).
Then, according to \cite[p.~2225]{CicaleseGV18},
\begin{align}
\label{eq: factor 2}
Q(i+1) \leq 2 \, Q(m).
\end{align}
Let
\begin{align}
\label{eq:S}
S := \sum_{j=i+1}^m Q(j)
\end{align}
be the sum of the $m-i$ smallest masses of the probability mass function $Q$.
In view of \eqref{eq: factor 2}, the vector
\begin{align}
\label{eq: overline Q}
\overline{Q} := \left( \frac{Q(i+1)}{S}, \ldots, \frac{Q(m)}{S} \right)
\end{align}
represents a probability mass function where the ratio of its maximal to minimal masses
is upper bounded by~2.

At this point, our analysis deviates from \cite[p.~2225]{CicaleseGV18}.
Applying Theorem~\ref{theorem: c} to $\overline{Q}$ with $\rho=2$ gives
\begin{align}
\label{eq: key result - Th1}
H_{\alpha}(\overline{Q}) \geq \log(m-i) - c_{\alpha}^{(\infty)}(2)
\end{align}
with
\begin{align}
c_{\alpha}^{(\infty)}(2)
&= \frac1{\alpha-1} \log \left( 1 + \frac{1+\alpha-2^\alpha}{1-\alpha} \right)
- \frac{\alpha}{\alpha-1} \log \left(1 + \frac{1+\alpha-2^\alpha}{(1-\alpha)(2^\alpha-1)} \right) \label{eq:th2-c} \\
&= \log \left( \frac{\alpha-1}{2^\alpha-2} \right)
- \frac{\alpha}{\alpha-1} \log \left( \frac{\alpha}{2^\alpha-1} \right) \label{eq:th2-d} \\
&= v(\alpha) \label{eq:th2-e}
\end{align}
where \eqref{eq:th2-c} follows from \eqref{eq: c inf. - alpha neq 1}; \eqref{eq:th2-d}
is straightforward algebra, and \eqref{eq:th2-e} is the definition in \eqref{eq:v}.

For $\alpha \in (0,1) \cup (1, \infty)$, we get
\begin{align}
H_{\alpha}(Q)
&= \frac1{1-\alpha} \, \log \left( \sum_{j=1}^m Q^\alpha(j) \right) \label{eq:th2-f} \\
&= \frac1{1-\alpha} \, \log \left( \sum_{j=1}^i Q^\alpha(j)
+ \sum_{j=i+1}^m Q^\alpha(j) \right) \label{eq:th2-g} \\[0.1cm]
&= \frac1{1-\alpha} \, \log \left( \sum_{j=1}^i Q^\alpha(j) + S^\alpha \,
\exp\bigl( (1-\alpha) H_{\alpha}(\overline{Q}) \bigr) \right) \label{eq:th2-h} \\[0.1cm]
&\geq \frac1{1-\alpha} \, \log \left( \sum_{j=1}^i Q^\alpha(j) + S^\alpha \,
\exp\Bigl( (1-\alpha) \bigl( \log(m-i) - v(\alpha) \bigr) \Bigr) \right) \label{eq:th2-i} \\[0.1cm]
&= \frac1{1-\alpha} \, \log \left( \sum_{j=1}^i Q^\alpha(j) + S^\alpha \,
(m-i)^{1-\alpha} \exp\bigl( (\alpha-1) \, v(\alpha) \bigr) \right) \label{eq:th2-j}
\end{align}
where \eqref{eq:th2-g} holds since $i \in \{0, \ldots, m-1\}$;
\eqref{eq:th2-h} follows from \eqref{eq: Renyi entropy} and \eqref{eq: overline Q};
\eqref{eq:th2-i} holds by \eqref{eq: key result - Th1}--\eqref{eq:th2-e}.

In view of \eqref{eq:S}, let $Q^\ast \in \set{P}_m$ be the probability mass function
which is given by
\begin{align}
\label{eq: Q ast}
Q^\ast(j) =
\begin{dcases}
Q(j), & j = 1, \ldots, i\\
\frac{S}{m-i}, & j=i+1, \ldots, m.
\end{dcases}
\end{align}
From \eqref{eq:th2-f}--\eqref{eq: Q ast}, we get
\begin{align}
H_{\alpha}(Q) &\geq \frac1{1-\alpha} \, \log \left( \sum_{j=1}^i \bigl(Q^\ast(j)\bigr)^\alpha
+ \sum_{j=i+1}^m \bigl(Q^\ast(j)\bigr)^\alpha \; \exp\bigl((\alpha-1) \, v(\alpha) \bigr) \right) \label{eq:th2-h2} \\
&= \frac1{1-\alpha} \, \log \left( \sum_{j=1}^m \bigl(Q^\ast(j)\bigr)^\alpha
+ \sum_{j=i+1}^m \bigl(Q^\ast(j)\bigr)^\alpha \; \Bigl(\exp\bigl((\alpha-1) \, v(\alpha) \bigr) - 1 \Bigr) \right) \label{eq:th2-i2} \\[0.1cm]
&= H_{\alpha}(Q^\ast) +
\frac1{1-\alpha} \, \log \left(1+T \; \Bigl(\exp\bigl((\alpha-1) \, v(\alpha) \bigr) - 1 \Bigr) \right) \label{eq:th2-j2}
\end{align}
with
\begin{align}
\label{eq:T}
T := \frac{\overset{m}{\underset{j=i+1}{\sum}} \bigl(Q^\ast(j)\bigr)^\alpha}{\overset{m}{\underset{j=1}{\sum}}  \bigl(Q^\ast(j)\bigr)^\alpha} \in [0,1].
\end{align}
Since $T \in [0,1]$ and $v(\alpha)>0$ for $\alpha>0$, it can be verified from
\eqref{eq:th2-h2}--\eqref{eq:th2-j2} that for $\alpha \in (0,1) \cup (1, \infty)$
\begin{align}
\label{eq:th2-k}
H_{\alpha}(Q) \geq H_{\alpha}(Q^\ast) - v(\alpha).
\end{align}
The validity of \eqref{eq:th2-k} is extended to $\alpha=1$ by taking the limit $\alpha \to 1$
on both sides of this inequality, and due to the continuity of $v(\cdot)$ in \eqref{eq:v}
at $\alpha=1$. Applying the majorization result $Q^\ast \prec P_{\widetilde{X}_m}$ in \cite[(31)]{CicaleseGV18},
it follows from \eqref{eq:th2-k} and the Schur-concavity of the R\'{e}nyi entropy that, for
all $\alpha > 0$,
\begin{align}
H_{\alpha}(Q) \geq H_{\alpha}(Q^\ast) - v(\alpha) \geq H_{\alpha}(\widetilde{X}_m) - v(\alpha),
\end{align}
which together with \eqref{eq:th2-ub}, prove Items a) and b) of Theorem~\ref{thm:Huffman}
(note that, in view of the construction of the deterministic function $f^\ast \in \set{F}_{n,m}$
in Step~4 of the above algorithm, we get $H_{\alpha}\bigl(f^\ast(X)\bigr) = H_\alpha(Q)$).

We next prove Item c).
Equality~\eqref{eq:min H_alpha} is due to the Schur-concavity of the R\'{e}nyi entropy, and since we have
\begin{itemize}
\item
$f(X)$ is an {\em aggregation} of $X$, i.e., the probability mass function $Q \in \set{P}_m$ of $f(X)$ satisfies
$Q(j) = \underset{i \in I_j}{\sum} P_X(i)$ ($1\leq j \leq m$) where $I_1, \ldots, I_m$ partition
$\{1, \ldots, n\}$ into $m$ disjoint subsets as follows:
\begin{align}
I_j := \{i \in \{1, \ldots, n\}: f(i)=j\}, \quad j=1, \ldots, m;
\end{align}
\item
By the assumption $P_X(1) \geq P_X(2) \geq \ldots \geq P_X(n)$, it follows
that $Q \prec P_{\widetilde{Y}_m}$ for every such $Q \in \set{P}_m$;
\item
From \eqref{eq:P_Ym}, $\widetilde{Y}_m = \widetilde{f}(X)$ where the function $\widetilde{f} \in \set{F}_{n,m}$ is given by
$\widetilde{f}(k):=1$ for all $k \in \{1, \ldots, n-m+1\}$, and $\widetilde{f}(n-m+i):=i$ for all $i \in \{2, \ldots, m\}$.
Hence, $P_{\widetilde{Y}_m}$ is an element in the set of the probability mass functions of $f(X)$ with $f \in \set{F}_{n,m}$
which majorizes every other element from this set.
\end{itemize}
\end{proof}

\begin{remark}
The solid line in the left plot of Figure~\ref{figure:c_alpha_plot2} depicts
$v(\alpha) := c_{\alpha}^{(\infty)}(2)$ in \eqref{eq:v} for $\alpha>0$. In view
of Lemma~\ref{lemma: c infinity}, and by the definition in \eqref{eq:v}, the
function $v \colon (0, \infty) \to (0, \infty)$ is indeed monotonically increasing
and continuous.
\end{remark}

\begin{remark}
\label{remark: Th. 2}
Inequality \eqref{eq: factor 2} leads to the application of Theorem~\ref{theorem: c} with
$\rho=2$ (see \eqref{eq: key result - Th1}).
In the derivation of Theorem~\ref{thm:Huffman}, we refer to $v(\alpha) := c_{\alpha}^{(\infty)}(2)$
(see \eqref{eq:th2-c}--\eqref{eq:th2-e}) rather than referring to $c_{\alpha}^{(n)}(2)$ (although,
from \eqref{eq: tightened bound}, we have $0 \leq c_{\alpha}^{(n)}(2) \leq v(\alpha)$ for all $\alpha>0$).
We do so since, for $n \geq 16$, the difference between the curves of $c_{\alpha}^{(n)}(2)$
(as a function of $\alpha>0$) and the curve of $c_{\alpha}^{(\infty)}(2)$ is marginal (see
the dashed and solid lines in the left plot of Figure~\ref{figure:c_alpha_plot2}), and also
because the function $v$ in \eqref{eq:v} is expressed in a closed form whereas $c_{\alpha}^{(n)}(2)$
is subject to numerical optimization for finite $n$ (see \eqref{eq: simplified opt.} and \eqref{eq: def. c}).
For this reason, Theorem~\ref{thm:Huffman} coincides with the result in \cite[Theorem~1]{CicaleseGV18}
for the Shannon entropy (i.e., for $\alpha=1$) while providing a generalization of the latter result
for R\'{e}nyi entropies of arbitrary positive orders $\alpha$.
Theorem~\ref{theorem: c}, however, both strengthens the bounds in \cite[Theorem~2]{CicaleseGV18}
for the Shannon entropy with finite cardinality $n$ (see Remark~\ref{remark: Th. 1 - entropy}),
and it also generalizes these bounds to R\'{e}nyi entropies of all positive orders.
\end{remark}

\begin{remark}
The minimizing probability mass function in \eqref{eq:P_Ym} to the optimization problem \eqref{eq:min H_alpha},
and the maximizing probability mass function in \eqref{eq:P_Xm} to the optimization problem \eqref{eq:lemma5}
are in general valid when the R\'{e}nyi entropy of a positive order is replaced by an arbitrary Schur-concave
function. However, the main results in \eqref{eq:bounds on RE of f(X)}--\eqref{eq: construction} hold particularly
for the R\'{e}nyi entropy.
\end{remark}

\begin{remark}
\label{remark: notation in Theorem 2}
Theorem~\ref{thm:Huffman} makes use of the random variables denoted by $\widetilde{X}_m$ and $\widetilde{Y}_m$,
rather than (more simply) $X_m$ and $Y_m$ respectively, because Section~\ref{section: IT Applications}
considers i.i.d. samples $\{X_i\}_{i=1}^k$ and $\{Y_i\}_{i=1}^k$ with $X_i \sim P_X$ and $Y_i \sim P_Y$;
note, however, that the probability mass functions of $\widetilde{X}_m$ and $\widetilde{Y}_m$ are different
from $P_X$ and $P_Y$, respectively, and for that reason we make use of tilted symbols in the left sides of
\eqref{eq:P_Xm} and \eqref{eq:P_Ym}.
\end{remark}

\section{Information-Theoretic Applications: Non-Asymptotic Bounds For Lossless Compression and Guessing}
\label{section: IT Applications}

Theorem~\ref{thm:Huffman} is applied in this section to derive non-asymptotic
bounds for lossless compression of discrete memoryless sources, and guessing moments.
Each of the two subsections starts with a short background for making the presentation
self contained.

\subsection{Guessing}
\label{subsection: guessing}

\subsubsection{Background}
\label{background-guessing}
The problem of guessing discrete random variables has various theoretical and
operational aspects in information theory (see \cite{Arikan96,ArikanM98-1,
ArikanM98-2,Boztas97,BracherHL15,BurinS_IT18,ChDu13,HanawalS11,HanawalS11b,Huleihel17,
Massey94,McElieceYu95,MerhavAr99,PfisterSu04,SantisGV01,SasonV18b,Sundaresan07,
Sundaresan07b,Yona17}). The central object of interest is the distribution of
the number of guesses required to identify a realization of a random variable $X$, taking values
on a finite or countably infinite set $\set{X} = \{1, \ldots, |\set{X}|\}$, by successively
asking questions of the form ``Is $X$ equal to $x$?'' until the value of $X$ is guessed
correctly. A {\em guessing function} is a one-to-one function $g \colon \set{X} \to \set{X}$,
which can be viewed as a permutation of the elements of $\set{X}$ in the order in which they
are guessed. The required number of guesses is therefore equal to $g(x)$ when $X=x$ with
$x \in \set{X}$.

Lower and upper bounds on the minimal expected number of required guesses for correctly
identifying the realization of $X$, expressed as a function of the Shannon entropy $H(X)$,
have been respectively derived by Massey \cite{Massey94} and by McEliece and Yu
\cite{McElieceYu95}, followed by a derivation of improved upper and lower bounds by De
Santis {\em et al.} \cite{SantisGV01}. More generally, given a probability mass function $P_X$ on $\set{X}$,
it is of interest to minimize the generalized guessing moment
$\expectation[g^{\rho}(X)] = \underset{x \in \set{X}}{\sum} P_X(x) g^{\rho}(x)$ for $\rho > 0$.
For an arbitrary positive $\rho$, the $\rho$-th moment of the number
of guesses is minimized by selecting the guessing function to be a {\em ranking function}
$g_X$, for which $g_X(x)=\ell$ if $P_X(x)$ is the $\ell$-th largest mass \cite{Massey94}.
Although the tie breaking affects the choice of $g_X$, the distribution of
$g_X(X)$ does not depend on how ties are resolved. Not only does this strategy minimize
the average number of guesses, but it also minimizes the $\rho$-th moment of the number
of guesses for every $\rho > 0$.
Upper and lower bounds on the $\rho$-th moment of ranking functions, expressed
in terms of the R\'{e}nyi entropies,
were derived by Arikan \cite{Arikan96}, Bozta\c{s} \cite{Boztas97}, followed by recent
improvements in the non-asymptotic regime by Sason and Verd\'{u} \cite{SasonV18b}.
Although if $|\set{X}|$ is small, it is straightforward to evaluate numerically the guessing
moments, the benefit of bounds expressed in terms of R\'{e}nyi entropies is particularly
relevant when dealing with a random vector $X^k=(X_1, \ldots, X_k)$ whose letters belong
to a finite alphabet $\set{X}$; computing all the probabilities of the mass function $P_{X^k}$
over the set $\set{X}^k$, and then sorting them in decreasing order for the calculation of
the $\rho$-th moment of the optimal guessing function for the elements of $\set{X}^k$ becomes
infeasible even for moderate values of $k$. In contrast, regardless of the value of $k$,
bounds on guessing moments which depend on the R\'{e}nyi entropy are readily computable if
for example $\{X_i\}_{i=1}^k$ are independent; in which case, the R\'{e}nyi entropy of the
vector is equal to the sum of the R\'{e}nyi entropies of its components.
Arikan's bounds in \cite{Arikan96} are asymptotically tight for random vectors of length
$k$ as $k \to \infty$, thus providing the correct exponential growth rate of the guessing
moments for sufficiently large $k$.

\subsubsection{Analysis}
\label{analysis-guessing}
We next analyze the following setup of guessing. Let $\{X_i\}_{i=1}^k$ be i.i.d.
random variables where $X_1 \sim P_X$ takes values on a finite set $\set{X}$ with $|\set{X}|=n$.
In order to cluster the data \cite{Gan2007} (see also \cite[Section~3.A]{CicaleseGV18}
and references therein), suppose that each $X_i$ is mapped to $Y_i = f(X_i)$
where $f \in \set{F}_{n,m}$ is an arbitrary deterministic function (independent of the index~$i$)
with $m<n$. Consequently, $\{Y_i\}_{i=1}^k$ are i.i.d., and each $Y_i$ takes values on a finite set
$\set{Y}$ with $|\set{Y}|=m<|\set{X}|$.

Let $g_{X^k} \colon \set{X}^k \to \{1, \ldots, n^k\}$ and $g_{Y^k} \colon \set{Y}^k \to \{1, \ldots, m^k\}$
be, respectively, the ranking functions of the random vectors $X^k = (X_1, \ldots, X_k)$ and
$Y^k = (Y_1, \ldots, Y_k)$ by sorting in separate decreasing orders the probabilities
$P_{X^k}(x^k) = \prod_{i=1}^k P_X(x_i)$ for $x^k \in \set{X}^k$, and
$P_{Y^k}(y^k) = \prod_{i=1}^k P_Y(y_i)$ for $y^k \in \set{Y}^k$
where ties in both cases are resolved arbitrarily. In view of Arikan's bounds on the
$\rho$-th moment of ranking functions (see \cite[Theorem~1]{Arikan96} for the lower bound, and
\cite[Proposition~4]{Arikan96} for the upper bound), since $|\set{X}^k| = n^k$ and $|\set{Y}^k| = m^k$,
the following bounds hold for all $\rho > 0$:
\begin{align}
\rho H_{\frac1{1+\rho}}(X) - \frac{\rho \log(1+k \ln n)}{k} \leq \frac1k \, \log \expectation\bigl[ g_{X^k}^\rho(X^k) \bigr]
\leq \rho H_{\frac1{1+\rho}}(X), \label{eq:20181011-a} \\[0.2cm]
\rho H_{\frac1{1+\rho}}(Y) - \frac{\rho \log(1+k \ln m)}{k} \leq \frac1k \, \log \expectation\bigl[ g_{Y^k}^\rho(Y^k) \bigr]
\leq \rho H_{\frac1{1+\rho}}(Y). \label{eq:20181011-b}
\end{align}
In the following, we rely on Theorem~\ref{thm:Huffman} and the bounds in \eqref{eq:20181011-a}
and \eqref{eq:20181011-b} to obtain bounds on the exponential reduction of
the $\rho$-th moment of the ranking function of $X^k$ as a result of its mapping to $Y^k$.
First, the combination of \eqref{eq:20181011-a} and \eqref{eq:20181011-b} yields
\begin{align}
& \rho \left[ H_{\frac1{1+\rho}}(X) - H_{\frac1{1+\rho}}(Y) \right] - \frac{\rho \log(1+k \ln n)}{k} \nonumber \\[0.1cm]
& \leq \frac1k \log \frac{\expectation\bigl[ g_{X^k}^\rho(X^k) \bigr]}{\expectation\bigl[ g_{Y^k}^\rho(Y^k) \bigr]} \label{eq:20181011-c} \\[0.1cm]
& \leq \rho \left[ H_{\frac1{1+\rho}}(X) - H_{\frac1{1+\rho}}(Y) \right] + \frac{\rho \log(1+k \ln m)}{k}. \label{eq:20181011-d}
\end{align}
In view of Theorem~\ref{thm:Huffman}-a) and \eqref{eq:20181011-c}, it follows that for an arbitrary $f \in \set{F}_{n,m}$ and $\rho > 0$
\begin{align}
\label{eq:20181011-e}
\frac1k \log \frac{\expectation\bigl[ g_{X^k}^\rho(X^k) \bigr]}{\expectation\bigl[ g_{Y^k}^\rho(Y^k) \bigr]}
\geq \rho \left[ H_{\frac1{1+\rho}}(X) - H_{\frac1{1+\rho}}(\widetilde{X}_m) \right] - \frac{\rho \log(1+k \ln n)}{k}
\end{align}
where $\widetilde{X}_m$ is a random variable whose probability mass function is given in \eqref{eq:P_Xm}. Note that
\begin{align}
\label{eq:20181011-f}
H_{\frac1{1+\rho}}(\widetilde{X}_m) \leq H_{\frac1{1+\rho}}(X), \quad \frac{\rho \log(1+k \ln n)}{k} \underset{k \to \infty}{\longrightarrow} 0
\end{align}
where the first inequality in \eqref{eq:20181011-f} holds since $P_X \prec P_{\widetilde{X}_m}$ (see Lemma~\ref{lemma5})
and the R\'{e}nyi entropy is Schur-concave.

By the explicit construction of the function $f^\ast \in \set{F}_{n,m}$ according to the algorithm in
Steps~1--4 in the proof of Theorem~\ref{thm:Huffman} (based on the Huffman procedure), by setting
$Y_i := f^\ast(X_i)$ for every $i \in \{1, \ldots, k\}$, it follows from \eqref{eq: construction} and
\eqref{eq:20181011-d} that for all $\rho > 0$

\begin{align}
\label{eq:20181011-g}
\frac1k \log \frac{\expectation\bigl[ g_{X^k}^\rho(X^k) \bigr]}{\expectation\bigl[ g_{Y^k}^\rho(Y^k) \bigr]}
\leq \rho \left[ H_{\frac1{1+\rho}}(X) - H_{\frac1{1+\rho}}(\widetilde{X}_m) + v\hspace*{-0.07cm}\left(\frac1{1+\rho}\right) \right]
+ \frac{\rho \log(1+k \ln m)}{k}
\end{align}
where the monotonically increasing function $v \colon (0, \infty) \to (0, \infty)$ is given in \eqref{eq:v}, and it
is depicted by the solid line in the left plot of Figure~\ref{figure:c_alpha_plot2}. In view of \eqref{eq:v}, it can be shown that
the linear approximation $v(\alpha) \approx v(1) \, \alpha$ is excellent for all $\alpha \in [0,1]$, and therefore
for all $\rho > 0$
\begin{align}
\label{eq:20181011-i}
v\hspace*{-0.07cm}\left(\frac1{1+\rho}\right) \approx \frac{0.08607}{1+\rho} \; \text{bits}.
\end{align}
Hence, for sufficiently large value of $k$, the gap between the lower and upper bounds in
\eqref{eq:20181011-e} and \eqref{eq:20181011-g} is marginal, being approximately equal to
$\frac{0.08607 \, \rho}{1+\rho} \; \text{bits}$ for all $\rho > 0$.

The following theorem summarizes our result in this section.
\begin{theorem}
\label{thm: guessing}
Let
\begin{itemize}
\item
$\{X_i\}_{i=1}^k$ be i.i.d. with $X_1 \sim P_X$ taking values on a set
$\set{X}$ with $|\set{X}|=n$;
\item
$Y_i = f(X_i)$, for every $i \in \{1, \ldots, k\}$, where
$f \in \set{F}_{n,m}$ is a deterministic function with $m<n$;
\item
$g_{X^k} \colon \set{X}^k \to \{1, \ldots, n^k\}$ and $g_{Y^k} \colon \set{Y}^k \to \{1, \ldots, m^k\}$
be, respectively, ranking functions of the random vectors $X^k = (X_1, \ldots, X_k)$ and $Y^k = (Y_1, \ldots, Y_k)$.
\end{itemize}
Then, for every $\rho > 0$,
\begin{enumerate}[a)]
\item
The lower bound in \eqref{eq:20181011-e} holds for every deterministic function $f \in \set{F}_{n,m}$;
\item
The upper bound in \eqref{eq:20181011-g} holds for the specific $f^\ast \in \set{F}_{n,m}$, whose construction
relies on the Huffman algorithm (see Steps~1--4 of the procedure in the proof of Theorem~\ref{thm:Huffman});
\item
The gap between these bounds, for $f = f^\ast$ and sufficiently
large $k$, is at most $\rho \, v\hspace*{-0.07cm}\left(\frac1{1+\rho}\right) \approx \frac{0.08607 \, \rho}{1+\rho} \; \text{bits}.$
\end{enumerate}
\end{theorem}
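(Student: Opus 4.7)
The plan is to combine Arikan's two-sided bounds on guessing moments with the bounds in Theorem~\ref{thm:Huffman} on the R\'enyi entropy of $f(X)$, exploiting the additivity of R\'enyi entropy under i.i.d.\ sampling. Since $\{X_i\}$ are i.i.d.\ with $X_1 \sim P_X$ and $Y_i = f(X_i)$ are consequently i.i.d.\ with $Y_1 \sim P_Y$, the induced distributions on $\set{X}^k$ and $\set{Y}^k$ are product measures, so $H_\alpha(X^k) = k H_\alpha(X)$ and $H_\alpha(Y^k) = k H_\alpha(Y)$ for every $\alpha > 0$. First I would write down Arikan's lower and upper bounds on $k^{-1}\log\expectation[g_{X^k}^\rho(X^k)]$ and $k^{-1}\log\expectation[g_{Y^k}^\rho(Y^k)]$ with $\alpha = \tfrac{1}{1+\rho}$, using $|\set{X}^k|=n^k$ and $|\set{Y}^k|=m^k$ to produce the $\log(1 + k\ln n)$ and $\log(1 + k\ln m)$ correction terms. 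Subtracting the appropriate pair then yields the two-sided bound \eqref{eq:20181011-c}--\eqref{eq:20181011-d} on $k^{-1}\log \bigl(\expectation[g_{X^k}^\rho(X^k)]/\expectation[g_{Y^k}^\rho(Y^k)]\bigr)$.

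For part~(a), I would invoke the upper half of Theorem~\ref{thm:Huffman}-a), namely $H_{1/(1+\rho)}\bigl(f(X)\bigr) \leq H_{1/(1+\rho)}(\widetilde{X}_m)$ (valid for arbitrary $f \in \set{F}_{n,m}$, and in particular for the one under consideration). Substituting this upper bound on $H_{1/(1+\rho)}(Y)$ into \eqref{eq:20181011-c} (applied in the direction that subtracts $H_{1/(1+\rho)}(Y)$) lowers the bracketed entropy difference to $H_{1/(1+\rho)}(X) - H_{1/(1+\rho)}(\widetilde{X}_m)$, yielding \eqref{eq:20181011-e}. Note that $\alpha = \tfrac{1}{1+\rho} \in (0,1]$ for $\rho > 0$, so Theorem~\ref{thm:Huffman} applies without restriction.

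For part~(b), I would instead use the Huffman-based construction $f^\ast \in \set{F}_{n,m}$ guaranteed by Theorem~\ref{thm:Huffman}-b), which satisfies $H_{1/(1+\rho)}\bigl(f^\ast(X)\bigr) \geq H_{1/(1+\rho)}(\widetilde{X}_m) - v\bigl(\tfrac{1}{1+\rho}\bigr)$, with $v$ as in \eqref{eq:v}. Setting $Y_i := f^\ast(X_i)$ and substituting this lower bound on $H_{1/(1+\rho)}(Y)$ into the upper estimate \eqref{eq:20181011-d} produces \eqref{eq:20181011-g}. Crucially, $f^\ast$ does not depend on $\alpha$ (as noted in Theorem~\ref{thm:Huffman}), so the same deterministic function simultaneously serves all $\rho > 0$.

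For part~(c), I would subtract \eqref{eq:20181011-e} from \eqref{eq:20181011-g}: the entropy terms $H_{1/(1+\rho)}(X) - H_{1/(1+\rho)}(\widetilde{X}_m)$ cancel, leaving
\begin{equation*}
\rho \, v\hspace*{-0.07cm}\left(\tfrac{1}{1+\rho}\right) + \frac{\rho \log(1+k\ln m)}{k} + \frac{\rho \log(1+k\ln n)}{k},
\end{equation*}
and the last two terms vanish as $k \to \infty$. The numerical estimate $\tfrac{0.08607\,\rho}{1+\rho}$ bits then follows from the near-linear approximation $v(\alpha) \approx v(1)\alpha$ on $[0,1]$, which can be checked by inspecting \eqref{eq:v} (the monotonicity and smoothness of $v$, and its value $v(1) \approx 0.08607$ bits, make the linear interpolant accurate throughout the relevant range). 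The only mild obstacle I foresee is bookkeeping: making sure the lower/upper Arikan bounds are used in the correct direction on each side of the ratio, and that $\alpha = \tfrac{1}{1+\rho}$ always falls in the regime where both Theorem~\ref{thm:Huffman} and the monotone positivity of $v$ apply; beyond that, the argument is a direct assembly of results already in hand.
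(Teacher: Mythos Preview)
Your proposal is correct and follows essentially the same approach as the paper: apply Arikan's two-sided guessing-moment bounds to the i.i.d.\ vectors $X^k$ and $Y^k$, subtract to obtain \eqref{eq:20181011-c}--\eqref{eq:20181011-d}, then feed in the upper bound $H_{1/(1+\rho)}(f(X)) \leq H_{1/(1+\rho)}(\widetilde{X}_m)$ from Theorem~\ref{thm:Huffman}-a) for part~(a) and the lower bound from Theorem~\ref{thm:Huffman}-b) with $f=f^\ast$ for part~(b), finishing with the cancellation and the linear approximation of $v$ for part~(c). The bookkeeping caution you flag is the only thing to watch, and you have it right.
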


\vspace*{0.1cm}
\subsubsection{Numerical Result}
\label{numerical results-guessing}
The following simple example illustrates the tightness of the achievable upper bound
and the universal lower bound in Theorem~\ref{thm: guessing}, especially for sufficiently
long sequences.
\begin{example}
\label{example: guessing}
\begin{figure}[h]
\vspace*{-4.2cm}
\hspace*{-1cm}
\includegraphics[width=10cm]{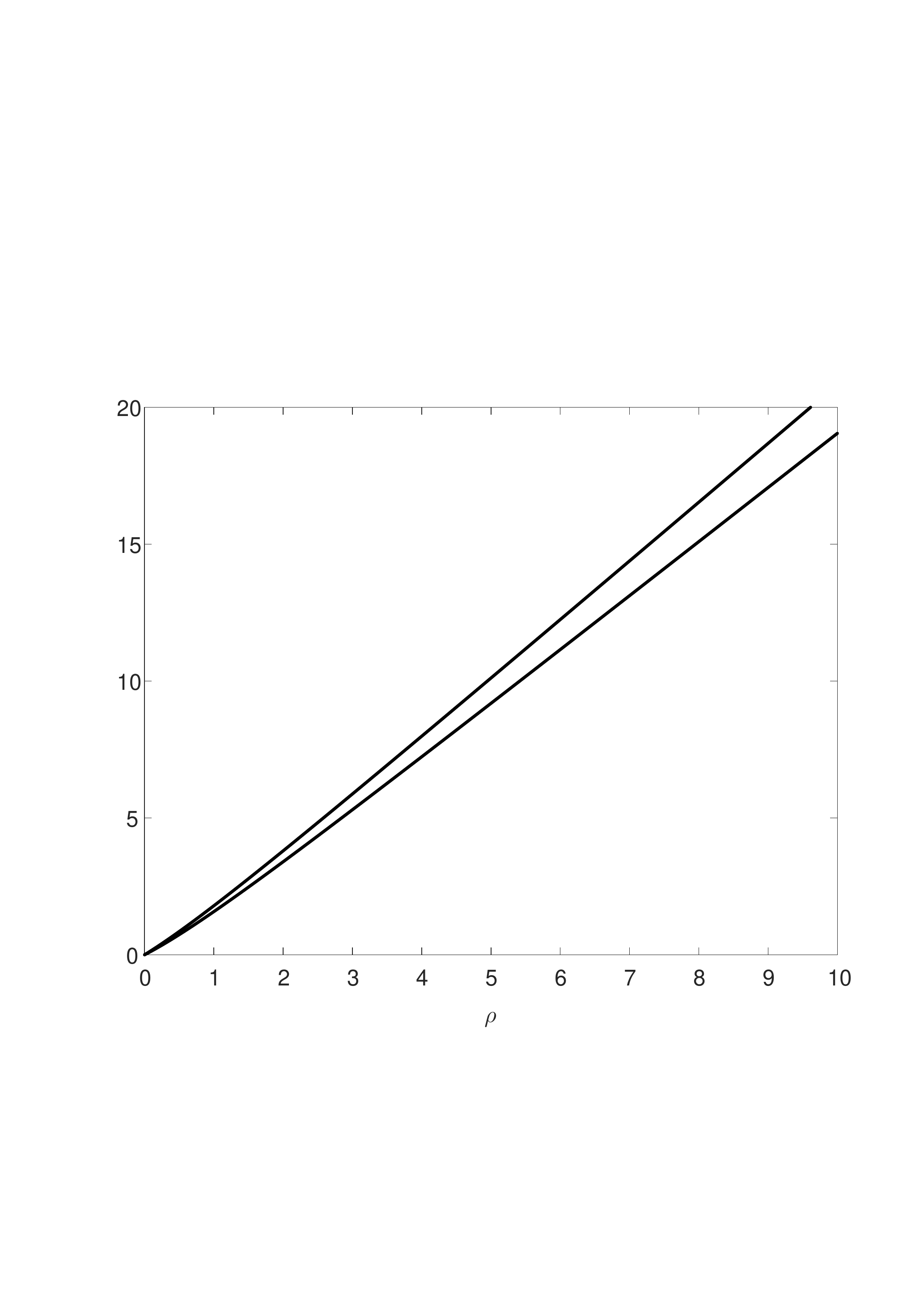}
\hspace*{-1.2cm}
\includegraphics[width=10cm]{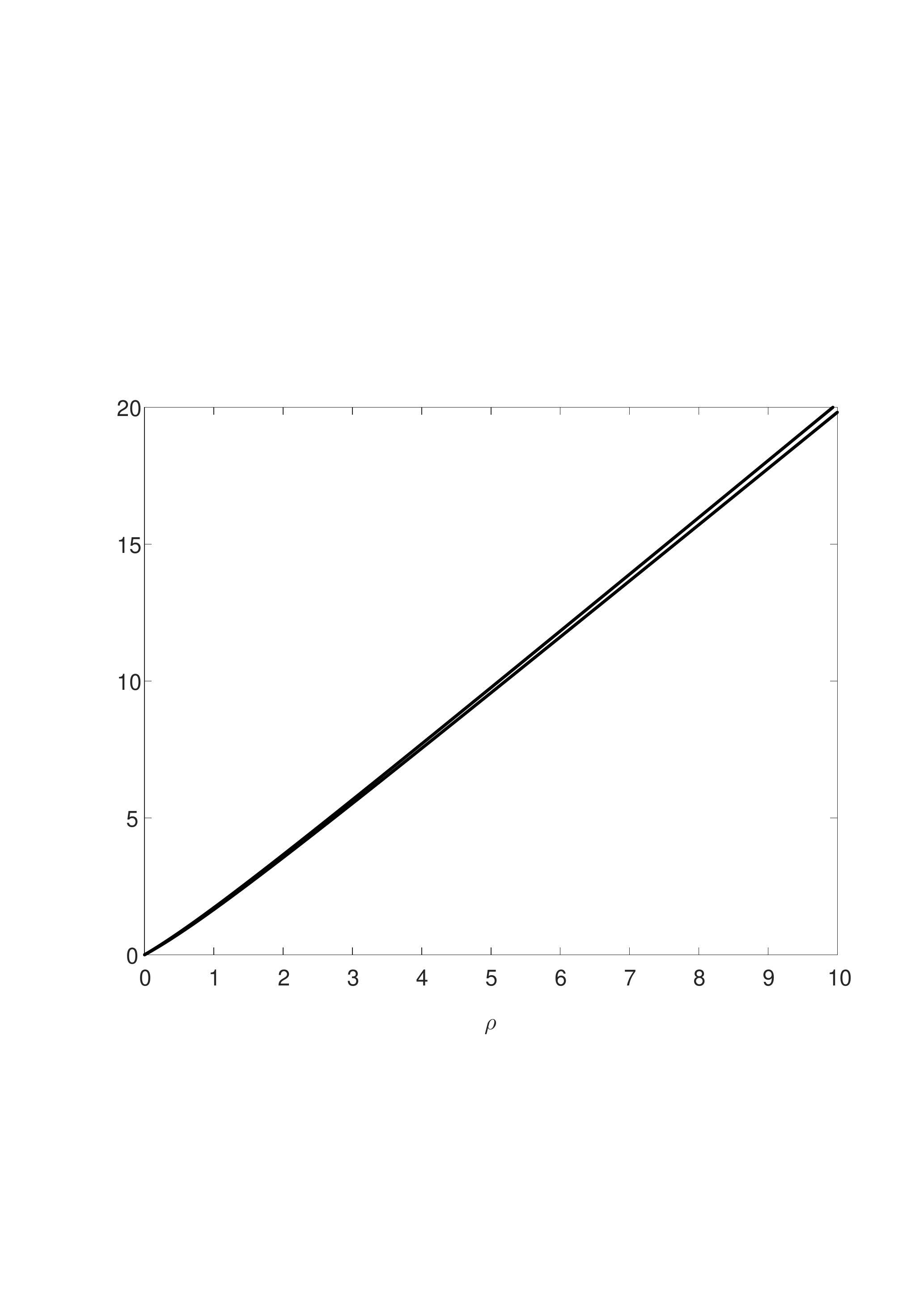}
\vspace*{-4cm}
\caption{\label{figure:guessing}
Plots of the upper and lower bounds on
$\frac1k \log_2 \frac{\expectation\bigl[ g_{X^k}^\rho(X^k) \bigr]}{\expectation\bigl[ g_{Y^k}^\rho(Y^k) \bigr]}$
in Theorem~\ref{thm: guessing}, as a function of $\rho > 0$, for random vectors of
length $k=100$ (left plot) or $k=1000$ (right plot) in the setting of
Example~\ref{example: guessing}. Each plot shows the universal lower bound for
an arbitrary deterministic $f \in \set{F}_{128, \, 16}$, and the achievable
upper bound with the construction of the deterministic function
$f=f^\ast \in \set{F}_{128, \, 16}$ (based on the Huffman algorithm) in
Theorem~\ref{thm: guessing} (see, respectively, \eqref{eq:20181011-e} and
\eqref{eq:20181011-g}).}
\end{figure}
Let $X$ be geometrically distributed restricted to $\{1, \ldots , n\}$ with
the probability mass function
\begin{align} \label{eq: geometric dist.}
P_X(j) = \frac{(1-a) \, a^{j-1}}{1-a^n}, \quad j \in \{1, \ldots, n\}
\end{align}
where $a = \tfrac{24}{25}$ and $n = 128$. Assume that $X_1, \ldots, X_k$
are i.i.d. with $X_1 \sim P_X$, and let $Y_i = f(X_i)$ with a deterministic
function $f \in \set{F}_{n,m}$ with $n =128$ and $m=16$. We compare the
upper and lower bounds in Theorem~\ref{thm: guessing} for the two cases
where the sequence $X^k = (X_1, \ldots, X_k)$ is of length $k=100$ or $k=1000$.
The lower bound in \eqref{eq:20181011-e} holds for an arbitrary deterministic
$f \in \set{F}_{n,m}$, and the achievable upper bound in \eqref{eq:20181011-g}
holds for the construction of the deterministic function
$f=f^\ast \in \set{F}_{n,m}$ (based on the Huffman algorithm) in
Theorem~\ref{thm: guessing}.

Numerical results are shown in Figure~\ref{figure:guessing}, providing plots
of the upper and lower bounds on
$\frac1k \log_2 \frac{\expectation\bigl[ g_{X^k}^\rho(X^k) \bigr]}{\expectation\bigl[ g_{Y^k}^\rho(Y^k) \bigr]}$
in Theorem~\ref{thm: guessing}, and illustrating the improved tightness of these
bounds when the value of $k$ is increased from 100 (left plot) to 1000 (right plot).
From Theorem~\ref{thm: guessing}-c), for sufficiently large $k$, the gap between
the upper and lower bounds is less than 0.08607~bits (for all $\rho>0$); this is
consistent with the right plot of Figure~\ref{figure:guessing} where $k=1000$.
\end{example}

\subsection{Lossless Source Coding}
\label{subsubsection: lossless source coding}

\subsubsection{Background}
\label{background-Campbell}
For uniquely-decodable (UD) lossless source coding, Campbell (\cite{Campbell65, Campbell66})
proposed the cumulant generating function of the codeword lengths as a generalization
to the frequently used design criterion of average code length. Campbell's
motivation in \cite{Campbell65} was to control the contribution of the longer codewords
via a free parameter in the cumulant generating function: if the value of this parameter
tends to zero, then the resulting design criterion becomes the average
code length per source symbol; on the other hand, by increasing the value of the free parameter, the penalty for longer
codewords is more severe, and the resulting code optimization yields a reduction in the fluctuations
of the codeword lengths.

We introduce the coding theorem by Campbell \cite{Campbell65} for lossless compression
of a discrete memoryless source (DMS) with UD codes, which serves for our analysis
jointly with Theorem~\ref{thm:Huffman}.
\begin{theorem}[Campbell 1965, \cite{Campbell65}]
\label{theorem: Campbell}
Consider a DMS which emits symbols with a probability
mass function $P_X$ defined on a (finite or countably infinite) set $\set{X}$.
Consider a UD fixed-to-variable source code operating on
source sequences of $k$ symbols with an alphabet of the codewords of size $D$.
Let $\ell(x^k)$ be the length of the codeword which corresponds to the source
sequence $x^k := (x_1, \ldots, x_k) \in \set{X}^k$. Consider the scaled
{\em cumulant generating function} of the codeword lengths\footnote{The
term {\em scaled} cumulant generating function is used in view of \cite[Remark~20]{SasonV18b}.}
\begin{align}
\label{eq: cumulant generating function}
\Lambda_k(\rho) := \frac1{k} \, \log_D \left( \, \sum_{x^k \in \set{X}^k}
P_{X^k}(x^k) \, D^{\rho \, \ell(x^k)} \right), \quad \rho > 0
\end{align}
where
\begin{align}
\label{eq: pmf}
P_{X^k}(x^k) = \prod_{i=1}^k P_X(x_i), \quad \forall \, x^k \in \set{X}^k.
\end{align}
Then, for every $\rho > 0$, the following hold:
\begin{enumerate}[a)]
\item Converse result:
\begin{align}
\label{eq: Campbell's converse result}
\frac{\Lambda_k(\rho)}{\rho} \geq \frac{1}{\log D} \; H_{\frac1{1+\rho}}(X).
\end{align}
\item Achievability result:
there exists a UD source code, for which
\begin{align}
\label{eq: Campbell's achievability result}
\frac{\Lambda_k(\rho)}{\rho} \leq \frac{1}{\log D} \; H_{\frac1{1+\rho}}(X) + \frac{1}{k}.
\end{align}
\end{enumerate}
\end{theorem}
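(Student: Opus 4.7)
The plan is to prove the two parts by the classical combination of Kraft's inequality for UD codes with Hölder's inequality; this is essentially how one obtains Arikan-type bounds, and the memoryless structure will then yield the single-letter form.

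For the converse (part a), I would first invoke Kraft's inequality: for any UD code with codeword-length function $\ell(\cdot)$ on $\set{X}^k$ and a $D$-ary alphabet, $\sum_{x^k \in \set{X}^k} D^{-\ell(x^k)} \leq 1$. The key step is then a reverse-Hölder manipulation: writing $P_{X^k}(x^k)^{1/(1+\rho)} = \bigl(P_{X^k}(x^k) D^{\rho\ell(x^k)}\bigr)^{1/(1+\rho)} \cdot D^{-\rho\ell(x^k)/(1+\rho)}$ and applying Hölder's inequality with conjugate exponents $1+\rho$ and $(1+\rho)/\rho$ gives
\begin{align}
\Bigl( \sum_{x^k} P_{X^k}(x^k)^{1/(1+\rho)} \Bigr)^{1+\rho}
\leq \Bigl( \sum_{x^k} P_{X^k}(x^k) D^{\rho \ell(x^k)} \Bigr) \Bigl( \sum_{x^k} D^{-\ell(x^k)} \Bigr)^{\rho}.
\end{align}
Using Kraft to bound the second factor on the right by $1$, taking $\log_D$, dividing by $k\rho$, and invoking the memoryless factorization \eqref{eq: pmf} together with the identity $\sum_{x^k} P_{X^k}^{\alpha}(x^k) = \bigl(\sum_x P_X^\alpha(x)\bigr)^k$ and the definition \eqref{eq: Renyi entropy} of $H_\alpha(X)$ with $\alpha = 1/(1+\rho)$ will produce the bound \eqref{eq: Campbell's converse result}.

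For the achievability (part b), I would construct a Shannon-type code with lengths tilted by the same order-$1/(1+\rho)$ Rényi tilt that appears naturally above. Define the tilted distribution
\begin{align}
Q(x^k) := \frac{P_{X^k}(x^k)^{1/(1+\rho)}}{\sum_{y^k} P_{X^k}(y^k)^{1/(1+\rho)}},
\end{align}
and assign codeword lengths $\ell(x^k) := \lceil -\log_D Q(x^k) \rceil$. Kraft's inequality $\sum D^{-\ell(x^k)} \leq \sum Q(x^k) = 1$ is satisfied, so a prefix (hence UD) code with these lengths exists. Plugging $\ell(x^k) \leq -\log_D Q(x^k) + 1$ into the definition \eqref{eq: cumulant generating function} of $\Lambda_k(\rho)$ produces, after a short algebraic simplification, the bound
\begin{align}
\sum_{x^k} P_{X^k}(x^k) D^{\rho \ell(x^k)} \leq D^\rho \Bigl( \sum_{x^k} P_{X^k}(x^k)^{1/(1+\rho)} \Bigr)^{1+\rho},
\end{align}
from which \eqref{eq: Campbell's achievability result} follows after taking $\log_D$, using the i.i.d.\ factorization, and dividing by $k\rho$.

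The routine pieces are Kraft's inequality, the Hölder manipulation, the $\lceil \cdot \rceil$ truncation losing at most $1$ in each length, and the product structure of $P_{X^k}$ which makes $\log_D \sum_{x^k} P_{X^k}^{1/(1+\rho)}$ split into $k$ copies of the single-letter quantity $\tfrac{\rho}{(1+\rho)\log D}\, H_{1/(1+\rho)}(X)$. The one spot that requires a bit of care is the direction of Hölder in the converse: one needs the Rényi sum $\sum P^{1/(1+\rho)}$ on the \emph{left} of the Hölder inequality in order to turn Kraft's $\leq 1$ into a \emph{lower} bound on $\sum P D^{\rho\ell}$. Once the tilt $Q \propto P^{1/(1+\rho)}$ is recognised as the natural one, both bounds fall out of the same computation and are tight up to the $\tfrac{1}{k}$ integer-length overhead, as expected.
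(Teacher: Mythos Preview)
Your proof is correct and follows the classical Campbell argument. Note, however, that the paper does not actually supply a proof of this theorem: it is quoted as a known result from \cite{Campbell65} and used as a black box. The only place the paper touches the argument is Remark~\ref{remark: construction of UD codes}, where the achievability construction is spelled out with lengths $\ell(x^k) = \bigl\lceil -\alpha \log_D P_{X^k}(x^k) + \log_D Q_k \bigr\rceil$ for $\alpha = \tfrac{1}{1+\rho}$ and $Q_k = \sum_{x^k} P_{X^k}^{\alpha}(x^k)$; this is precisely your tilted Shannon code $\ell(x^k) = \lceil -\log_D Q(x^k) \rceil$ with $Q \propto P_{X^k}^{1/(1+\rho)}$, written in slightly different notation. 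Your converse via H\"older with exponents $1+\rho$ and $(1+\rho)/\rho$ together with Kraft's inequality is exactly Campbell's original route, so there is nothing to contrast.
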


The bounds in Theorem~\ref{theorem: Campbell}, expressed in terms of the R\'{e}nyi entropy,
imply that for sufficiently long source sequences, it is possible to make the scaled
cumulant generating function of the codeword lengths approach the R\'{e}nyi entropy as closely
as desired by a proper fixed-to-variable UD source code; moreover, the converse result
shows that there is no UD source code for which the scaled cumulant generating function
of its codeword lengths lies below the R\'{e}nyi entropy.
By invoking L'H\^{o}pital's rule, one gets from \eqref{eq: cumulant generating function}
\begin{align}
\label{eq: limit rho tends to zero}
\lim_{\rho \downarrow 0} \frac{\Lambda_k(\rho)}{\rho}
= \frac1k \sum_{x^k \in \set{X}^k} P_{X^k}(x^k) \, \ell(x^k) = \frac1k \, \expectation[\ell(X^k)].
\end{align}
Hence, by letting $\rho$ tend to zero in \eqref{eq: Campbell's converse result} and
\eqref{eq: Campbell's achievability result}, it follows from \eqref{eq: Shannon entropy}
that Campbell's result in Theorem~\ref{theorem: Campbell} generalizes the well-known bounds
on the optimal average length of UD fixed-to-variable source codes (see, e.g.,
\cite[(5.33) and (5.37)]{Cover_Thomas}):
\begin{align}
\label{eq: Shannon}
\frac{1}{\log D} \; H(X) \leq \frac1k \; \expectation[\ell(X^k)] \leq \frac{1}{\log D} \; H(X) + \frac1k,
\end{align}
and \eqref{eq: Shannon} is satisfied by Huffman coding (see, e.g., \cite[Theorem~5.8.1]{Cover_Thomas}).
Campbell's result therefore generalizes Shannon's fundamental result in \cite{CES48} for the average
codeword lengths of lossless compression codes, expressed in terms of the Shannon entropy.

Following the work by Campbell \cite{Campbell65}, Courtade and Verd\'{u}
derived in \cite{CV2014a} non-asymptotic bounds for the scaled cumulant
generating function of the codeword lengths for $P_X$-optimal variable-length
lossless codes \cite{KYSV14,verdubook}. These bounds were used in \cite{CV2014a}
to obtain simple proofs of the asymptotic normality of the distribution of
codeword lengths, and the reliability function of memoryless sources allowing
countably infinite alphabets. Sason and Verd\'{u} recently derived in \cite{SasonV18b}
improved non-asymptotic bounds on the cumulant generating function of the codeword
lengths for fixed-to-variable optimal lossless source coding without prefix constraints,
and non-asymptotic bounds on the reliability function of a DMS, tightening the bounds
in \cite{CV2014a}.

\subsubsection{Analysis}
\label{analysis-source compression}
The following analysis for lossless source compression with UD codes relies on
a combination of Theorems~\ref{thm:Huffman} and~\ref{theorem: Campbell}.

Let $X_1, \ldots, X_k$ be i.i.d. symbols which are emitted from a DMS according to a probability
mass function $P_X$ whose support is a finite set $\set{X}$ with $|\set{X}|=n$. Similarly to
Section~\ref{subsection: guessing}, in order to cluster the data, suppose that
each symbol $X_i$ is mapped to $Y_i = f(X_i)$ where $f \in \set{F}_{n,m}$ is an arbitrary
deterministic function (independent of the index $i$) with $m<n$. Consequently, the i.i.d.
symbols $Y_1, \ldots, Y_k$ take values on a set $\set{Y}$ with $|\set{Y}|=m<|\set{X}|$.
Consider two UD fixed-to-variable source codes: one operating on the sequences $x^k \in \set{X}^k$,
and the other one operates on the sequences $y^k \in \set{Y}^k$; let $D$ be the size of
the alphabets of both source codes.
Let $\ell(x^k)$ and $\overline{\ell}(y^k)$ denote the length of the codewords for the
source sequences $x^k$ and $y^k$, respectively, and let $\Lambda_k(\cdot)$ and
$\overline{\Lambda}_k(\cdot)$ denote their corresponding scaled cumulant generating functions
(see \eqref{eq: cumulant generating function}).

In view of Theorem~\ref{theorem: Campbell}-b), for every $\rho > 0$,
there exists a UD source code for the sequences in $\set{X}^k$ such that the scaled cumulant
generating function of its codeword lengths satisfies \eqref{eq: Campbell's achievability result}.
Furthermore, from Theorem~\ref{theorem: Campbell}-a), we get
\begin{align}
\label{eq: 20181104-a}
\frac{\overline{\Lambda}_k(\rho)}{\rho} \geq \frac1{\log D} \; H_{\frac1{1+\rho}}(Y).
\end{align}
From \eqref{eq: Campbell's achievability result}, \eqref{eq: 20181104-a}
and Theorem~\ref{thm:Huffman}~a) and b), for every $\rho>0$, there
exist a UD source code for the sequences in $\set{X}^k$, and a construction of a
deterministic function $f \in \set{F}_{n,m}$ (as specified by Steps~1--4 in the
proof of Theorem~\ref{thm:Huffman}, borrowed from \cite{CicaleseGV18}) such that
the difference between the two scaled cumulant generating functions satisfies
\begin{align}
\label{eq:UB lossless}
\Lambda_k(\rho) - \overline{\Lambda}_k(\rho) \leq \frac{\rho}{\log D} \left[ H_{\frac1{1+\rho}}(X)
- H_{\frac1{1+\rho}}(\widetilde{X}_m) + v\hspace*{-0.07cm}\left(\frac1{1+\rho}\right) \right] + \frac{\rho}{k},
\end{align}
where \eqref{eq:UB lossless} holds for every UD source code operating on the sequences in $\set{Y}^k$
with $Y_i = f(X_i)$ (for $i = 1, \ldots, k$) and the specific construction of $f \in \set{F}_{n,m}$ as above, and
$\widetilde{X}_m$ in the right side of \eqref{eq:UB lossless} is a random variable whose probability mass
function is given in \eqref{eq:P_Xm}. The right side of \eqref{eq:UB lossless} can be very well approximated
(for all $\rho>0$) by using \eqref{eq:20181011-i}.

We proceed with a derivation of a lower bound on the left side of \eqref{eq:UB lossless}.
In view of Theorem~\ref{theorem: Campbell}, it follows that \eqref{eq: Campbell's converse result}
is satisfied for every UD source code which operates on the sequences in $\set{X}^k$; furthermore,
Theorems~\ref{thm:Huffman} and~\ref{theorem: Campbell} imply that, for every $f \in \set{F}_{n,m}$,
there exists a UD source code which operates on the sequences in $\set{Y}^k$ such that
\begin{align}
\frac{\overline{\Lambda}_k(\rho)}{\rho} &\leq \frac1{\log D} \; H_{\frac1{1+\rho}}(Y) + \frac1{k}, \label{eq:20181030b} \\
&\leq \frac1{\log D} \; H_{\frac1{1+\rho}}(\widetilde{X}_m) + \frac1{k}, \label{eq:20181030c}
\end{align}
where \eqref{eq:20181030c} is due to \eqref{eq:lemma6} since $Y_i=f(X_i)$ (for $i = 1, \ldots, k$)
with an arbitrary deterministic function $f \in \set{F}_{n,m}$, and $Y_i \sim P_Y$ for every $i$;
hence, from \eqref{eq: Campbell's converse result}, \eqref{eq:20181030b} and \eqref{eq:20181030c},
\begin{align}
\label{eq:LB lossless}
\Lambda_k(\rho) - \overline{\Lambda}_k(\rho) \geq \frac{\rho}{\log D} \left( H_{\frac1{1+\rho}}(X)
- H_{\frac1{1+\rho}}(\widetilde{X}_m) \right) - \frac{\rho}{k}.
\end{align}

We summarize our result as follows.
\begin{theorem}
\label{thm: lossless compression}
Let
\begin{itemize}
\item
$X_1, \ldots, X_k$ be i.i.d. symbols which are emitted from a DMS according to a probability
mass function $P_X$ whose support is a finite set $\set{X}$ with $|\set{X}|=n$;
\item
Each symbol $X_i$ be mapped to $Y_i = f(X_i)$ where $f \in \set{F}_{n,m}$ is the
deterministic function (independent of the index $i$) with $m<n$, as specified by
Steps~1--4 in the proof of Theorem~\ref{thm:Huffman} (borrowed from \cite{CicaleseGV18});
\item
Two UD fixed-to-variable source codes be used: one code encodes the sequences
$x^k \in \set{X}^k$, and the other code encodes their mappings $y^k \in \set{Y}^k$;
let the common size of the alphabets of both codes be $D$;
\item
$\Lambda_k(\cdot)$ and $\overline{\Lambda}_k(\cdot)$ be, respectively, the
scaled cumulant generating functions of the codeword lengths of the $k$-length
sequences in $\set{X}^k$ (see \eqref{eq: cumulant generating function}) and
their mapping to $\set{Y}^k$.
\end{itemize}
Then, for every $\rho > 0$, the following holds for the difference between the scaled
cumulant generating functions $\Lambda_k(\cdot)$ and $\overline{\Lambda}_k(\cdot)$:
\begin{enumerate}[a)]
\item
There exists a UD source code for the sequences in $\set{X}^k$ such that the
upper bound in \eqref{eq:UB lossless} is satisfied for every UD source code
which operates on the sequences in $\set{Y}^k$;
\item
There exists a UD source code for the sequences in $\set{Y}^k$ such that the lower
bound in \eqref{eq:LB lossless} holds for every UD source code for the sequences in
$\set{X}^k$; furthermore, the lower bound in \eqref{eq:LB lossless} holds in general
for every deterministic function $f \in \set{F}_{n,m}$;
\item
The gap between the upper and lower bounds in \eqref{eq:UB lossless} and
\eqref{eq:LB lossless}, respectively, is at most
$\frac{\rho}{\log D} \; v\hspace*{-0.07cm}\left(\frac1{1+\rho}\right) + \frac{2\rho}{k}$
(the function $v \colon (0, \infty) \to (0, \infty)$ is introduced in \eqref{eq:v}),
which is approximately $\frac{0.08607 \rho \, \log_D 2}{1+\rho} + \frac{2\rho}{k}$;
\item
The UD source codes in Items a) and b) for the sequences in $\set{X}^k$ and
$\set{Y}^k$, respectively, can be constructed to be prefix codes by the
algorithm in Remark~\ref{remark: construction of UD codes}.
\end{enumerate}
\end{theorem}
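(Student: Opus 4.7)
The plan is to assemble parts (a)--(c) by combining Campbell's source coding theorem (Theorem~\ref{theorem: Campbell}) with the R\'enyi-entropy sandwich provided by Theorem~\ref{thm:Huffman}; this has, in essence, already been carried out in the analysis preceding the theorem statement, so the main work is to make precise which direction of each bound is invoked on each side, and then to address the prefix-code claim in part (d) separately.

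For part (a) I would fix $\rho > 0$ and invoke Campbell's achievability (Theorem~\ref{theorem: Campbell}-b)) on $\set{X}^k$ to obtain a UD code satisfying \eqref{eq: Campbell's achievability result}, and Campbell's converse (Theorem~\ref{theorem: Campbell}-a)) on $\set{Y}^k$ (which applies to every UD code there). The crucial R\'enyi-entropy input is Theorem~\ref{thm:Huffman}-b): for the Huffman-based $f^\ast \in \set{F}_{n,m}$ we have $H_{1/(1+\rho)}(f^\ast(X)) \geq H_{1/(1+\rho)}(\widetilde{X}_m) - v(1/(1+\rho))$. Subtracting the two Campbell bounds and substituting this inequality yields \eqref{eq:UB lossless}.

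Part (b) reverses the two directions of Campbell: converse on $\set{X}^k$ (valid for \emph{every} UD code there) and achievability on $\set{Y}^k$ (to produce one good UD code). The point I would emphasize is that here I use Lemma~\ref{lemma6}, which gives $H_{1/(1+\rho)}(f(X)) \leq H_{1/(1+\rho)}(\widetilde{X}_m)$ for \emph{every} $f \in \set{F}_{n,m}$, not just $f^\ast$; this is precisely what makes \eqref{eq:LB lossless} universal in the choice of the mapping. Part (c) is then a direct arithmetic check: subtracting \eqref{eq:LB lossless} from \eqref{eq:UB lossless} gives the gap $(\rho/\log D)\, v(1/(1+\rho)) + 2\rho/k$, and the linear approximation $v(\alpha) \approx 0.08607\,\alpha$ on $[0,1]$ used in \eqref{eq:20181011-i} yields the numerical form stated.

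The only part not already contained in the preceding analysis is (d). The route I would take is to realize Campbell's achievability code as a Shannon-style prefix code via the length assignment $\ell(z^k) = \lceil -\log_D Q_\rho(z^k) \rceil$, where $Q_\rho$ is the tilted (escort) distribution $Q_\rho(z^k) \propto P_{Z^k}(z^k)^{1/(1+\rho)}$ on the relevant alphabet $\set{X}^k$ or $\set{Y}^k$; Kraft's inequality ensures a prefix code with these integer lengths exists, and substituting into \eqref{eq: cumulant generating function} together with the bound $\lceil a \rceil < a + 1$ recovers \eqref{eq: Campbell's achievability result} with exactly the $1/k$ slack already present. Applying this construction separately on $\set{X}^k$ (for (a)) and on $\set{Y}^k$ (for (b)) furnishes the prefix codes claimed. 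The main (and only minor) obstacle is checking that the rounding does not inflate $\Lambda_k(\rho)$ by more than $1/k$; this is the content of the explicit algorithm recorded in Remark~\ref{remark: construction of UD codes}.
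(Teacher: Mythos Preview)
Your proposal is correct and follows essentially the same route as the paper: Campbell's achievability/converse on the appropriate side combined with Theorem~\ref{thm:Huffman}-b) (for the upper bound) and Lemma~\ref{lemma6} (for the universal lower bound), with part~(c) by direct subtraction. Your prefix-code construction for~(d) via the escort distribution $Q_\rho(z^k)\propto P_{Z^k}(z^k)^{1/(1+\rho)}$ is exactly the Campbell length assignment recorded in Remark~\ref{remark: construction of UD codes}, just rewritten in normalized form.
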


\vspace*{0.1cm}
\begin{remark}[An Algorithm for Theorem~\ref{thm: lossless compression} d)]
\label{remark: construction of UD codes}
A construction of the UD source codes for the sequences in $\set{X}^k$ and $\set{Y}^k$,
whose existence is assured by Theorem~\ref{thm: lossless compression}~a) and~b) respectively,
is obtained by the following algorithm (of three steps) which also constructs them as prefix codes:
\begin{enumerate}[1)]
\item As a preparatory step, we first calculate the probability mass function $P_Y$
from the given probability mass function $P_X$ and the deterministic function
$f \in \set{F}_{n,m}$ which is obtained by Steps~1--4 in the proof of Theorem~\ref{thm:Huffman};
accordingly, $P_Y(y) = \underset{x \in \set{X}: \, f(x)=y}{\sum} P_X(x)$ for all $y \in \set{Y}$.
We then further calculate the probability mass functions for the i.i.d. sequences in $\set{X}^k$
and $\set{Y}^k$ (see \eqref{eq: pmf}); recall that the number of types in $\set{X}^k$
and $\set{Y}^k$ is polynomial in $k$ (being upper bounded by $(k+1)^{n-1}$ and $(k+1)^{m-1}$,
respectively), and the values of these probability mass functions are fixed over each type;
\item The sets of codeword lengths of the two UD source codes, for the sequences in
$\set{X}^k$ and $\set{Y}^k$, can (separately) be designed according to the achievability
proof in Campbell's paper (see \cite[p.~428]{Campbell65}). More explicitly, let $\alpha := \frac1{1+\rho}$;
for all $x^k \in \set{X}^k$, let $\ell(x^k) \in \naturals$ be given by
\begin{align} \label{eq1: Campbell}
\ell(x^k) = \bigl\lceil -\alpha \log_D P_{X^k}(x^k) + \log_D Q_k \big\rceil
\end{align}
with
\begin{align} \label{eq2: Campbell}
Q_k := \sum_{x^k \in \set{X}^k} P_{X^k}^{\alpha}(x^k) = \left( \sum_{x \in \set{X}} P_X^{\alpha}(x) \right)^k,
\end{align}
and let $\overline{\ell}(y^k) \in \naturals$, for all $y^k \in \set{Y}^k$, be given similarly to \eqref{eq1: Campbell}
and \eqref{eq2: Campbell} by replacing $P_X$ with $P_Y$, and $P_{X^k}$ with $P_{Y^k}$. This suggests
codeword lengths for the two codes which fulfil \eqref{eq: Campbell's achievability result} and \eqref{eq:20181030b},
and also both satisfy Kraft's inequality;
\item The separate construction of two prefix codes (a.k.a. instantaneous codes) based on their given sets of
codeword lengths $\{\ell(x^k)\}_{x^k \in \set{X}^k}$ and $\{\overline{\ell}(y^k)\}_{y^k \in \set{Y}^k}$,
as determined in Step~2, is standard (see, e.g., the construction in the proof of \cite[Theorem~5.2.1]{Cover_Thomas}).
\end{enumerate}
\end{remark}

\vspace*{0.1cm}
Theorem~\ref{thm: lossless compression} is of interest since it provides upper and lower
bounds on the reduction in the cumulant generating function of close-to-optimal UD source
codes as a result of clustering data, and Remark~\ref{remark: construction of UD codes} suggests an algorithm to
construct such UD codes which are also prefix codes. For long enough sequences (as $k \to \infty$),
the upper and lower bounds on the difference between the scaled cumulant generating functions
of the suggested source codes for the original and clustered data almost match (see
\eqref{eq:UB lossless} and \eqref{eq:LB lossless}), being roughly equal to
$\rho \left( H_{\frac1{1+\rho}}(X)- H_{\frac1{1+\rho}}(\widetilde{X}_m) \right)$ (with logarithms
on base $D$, which is the alphabet size of the source codes); as $k \to \infty$,
the gap between these upper and lower bounds is less than $0.08607 \log_D 2$.
Furthermore, in view of \eqref{eq: limit rho tends to zero},
\begin{align}
\lim_{\rho \downarrow 0} \frac{\Lambda_k(\rho) - \overline{\Lambda}_k(\rho)}{\rho}
= \frac1k \left( \expectation[\ell(X^k)] - \expectation[\overline{\ell}(Y^k)] \right),
\end{align}
so, it follows from \eqref{eq: Shannon entropy}, \eqref{eq:v}, \eqref{eq:UB lossless}
and \eqref{eq:LB lossless} that the difference between the average code
lengths (normalized by~$k$) of the original and clustered data satisfies
\begin{align}
\label{eq: 20181030e}
\frac{H(X) - H(\widetilde{X}_m)}{\log D} - \frac1k
\leq \frac{\expectation[\ell(X^k)] - \expectation[\overline{\ell}(Y^k)]}{k}
\leq \frac{H(X) - H(\widetilde{X}_m) + 0.08607 \log 2}{\log D},
\end{align}
where the gap between the upper and lower bounds in \eqref{eq: 20181030e} is
equal to $0.08607 \log_D 2 + \frac1k$.

\appendices

\section{Proof of Lemma~\ref{lemma: min RE}}
\label{appendix: proof of min RE lemma}
We first find the extreme values of $p_{\min}$ under the assumption
that $P \in \set{P}_n(\rho)$. If $\frac{p_{\max}}{p_{\min}} = 1$,
then $P$ is the equiprobable distribution on $\set{X}$ and $p_{\min} = \frac1{n}$.
On the other hand, if $\frac{p_{\max}}{p_{\min}} = \rho$, then the
minimal possible value of $p_{\min}$ is obtained when $P$ is the
one-odd-mass distribution with $n-1$ masses equal to $\rho \, p_{\min}$
and a smaller mass equal to $p_{\min}$. The latter case yields
$p_{\min} = \frac1{1+(n-1)\rho}$.

Let $\beta := p_{\min}$, so $\beta$ can get any value in the interval
$\left[ \frac1{1+(n-1)\rho}, \, \frac1n \right] := \Gamma_\rho^{(n)}$.
From Lemma~\ref{lemma: Lemma 6 from CicaleseGV18}, $P \prec Q_{\beta}$
and $Q_{\beta} \in \set{P}_n(\rho)$, and the Schur-concavity of the
R\'{e}nyi entropy yields $H_{\alpha}(P) \geq H_{\alpha}(Q_\beta)$
for all $P \in \set{P}_n(\rho)$ with $p_{\min} = \beta$.
Minimizing $H_{\alpha}(P)$ over $P \in \set{P}_n(\rho)$ can be hence
restricted to minimizing $H_{\alpha}(Q_\beta)$ over $\beta \in \Gamma_\rho^{(n)}$.

\vspace*{-0.1cm}
\section{Proof of Lemma~\ref{lemma: boundedness and monotonocity}}
\label{appendix: proof of boundedness and monotonocity}
The sequence $\{c_{\alpha}^{(n)}(\rho)\}_{n \in \naturals}$ is non-negative since $H_\alpha(P) \leq \log n$
for all $P \in \set{P}_n$. Furthermore, to prove \eqref{eq: bounded},
\begin{align}
0 \leq c_{\alpha}^{(n)}(\rho) &= \log n - \min_{P \in \set{P}_n(\rho)} H_{\alpha}(P) \\
&\leq \log n - \min_{P \in \set{P}_n(\rho)} H_{\infty}(P) \label{eq:lemma1-1} \\
&\leq \log n - \log \frac{n}{\rho} = \log \rho \label{eq:lemma1-2}
\end{align}
where \eqref{eq:lemma1-1} holds since $H_{\alpha}(P)$ is monotonically decreasing in $\alpha$, and \eqref{eq:lemma1-2}
is due to \eqref{eq: RE of infinite order} and $p_{\max} \leq \frac{\rho}{n}$.

Let $\mathrm{U}_n$ denote the equiprobable probability mass function on $\{1, \ldots, n\}$. By the identity
\begin{align} \label{eq: identity RD}
D_{\alpha}(P \| \mathrm{U}_n) = \log n - H_{\alpha}(P),
\end{align}
and since, by Lemma~\ref{lemma: min RE}, $H_{\alpha}(\cdot)$ attains its minimum over the set of probability mass functions $\set{P}_n(\rho)$,
it follows that $D_{\alpha}(\cdot \| \mathrm{U}_n)$ attains its maximum over this set. Let $P^\ast \in \set{P}_n(\rho)$ be the probability
measure which achieves the minimum in $c_{\alpha}^{(n)}(\rho)$ (see \eqref{eq: def. c}), then from \eqref{eq: identity RD}
\begin{align}
c_{\alpha}^{(n)}(\rho) &= \max_{P \in \set{P}_n(\rho)} D_{\alpha}(P \| \mathrm{U}_n) \label{eq: 20181014-a} \\
&= D_{\alpha}(P^\ast \| \mathrm{U}_n).  \label{eq: 20181014-b}
\end{align}
Let $Q^\ast$ be the probability mass function which is defined on $\{1, \ldots, 2n\}$ as follows:
\begin{align} \label{eq: Q ast appendix}
Q^\ast(i) =
\begin{dcases}
\tfrac12 \, P^\ast(i), & i \in \{1, \ldots, n\}, \\
\tfrac12 \, P^\ast(i-n), & i \in \{n+1, \ldots, 2n\}.
\end{dcases}
\end{align}
Since by assumption $P^\ast \in \set{P}_n(\rho)$, it is easy to verify from \eqref{eq: Q ast appendix} that
\begin{align}
Q^\ast \in \set{P}_{2n}(\rho).
\end{align}
Furthermore, from \eqref{eq: Q ast appendix},
\begin{align}
D_{\alpha}(Q^\ast \| U_{2n}) &= \frac1{\alpha-1} \, \log \left( \sum_{i=1}^{2n} \bigl(Q^\ast(i)\bigr)^\alpha \left(\frac1{2n}\right)^{1-\alpha} \right) \label{eq1} \\
&= \frac1{\alpha-1} \log \left( \tfrac12 \, \sum_{i=1}^n \bigl(P^\ast(i)\bigr)^\alpha \left(\frac1{n}\right)^{1-\alpha} + \tfrac12 \sum_{i=n+1}^{2n} \bigl(P^\ast(i-n)\bigr)^\alpha \left(\frac1{n}\right)^{1-\alpha} \right) \label{eq2} \\
&= \frac1{\alpha-1} \log \left( \sum_{i=1}^n \bigl( P^\ast(i) \bigr)^{\alpha} \left(\frac1{n}\right)^{1-\alpha} \right) \label{eq3} \\
&= D_{\alpha}(P^\ast \| U_n). \label{eq4}
\end{align}
Combining \eqref{eq: 20181014-a}--\eqref{eq4} yields
\begin{align}
c_{\alpha}^{(2n)}(\rho) &= \max_{Q \in \set{P}_{2n}(\rho)} D_{\alpha}(Q \| \mathrm{U}_{2n}) \\
&\geq D_{\alpha}(Q^\ast \| \mathrm{U}_{2n}) \\
&= D_{\alpha}(P^\ast \| \mathrm{U}_n)\\
&= c_{\alpha}^{(n)}(\rho),
\end{align}
proving \eqref{eq: monotonicity}. Finally, in view of \eqref{eq: 20181014-a}, $c_{\alpha}^{(n)}(\rho)$ is monotonically increasing in $\alpha$
since so is the R\'{e}nyi divergence of order $\alpha$ (see \cite[Theorem~3]{ErvenH14}).

\vspace*{-0.1cm}
\section{Proof of Lemma~\ref{lemma: c infinity}}
\label{appendix: proof of lemma c_infinity}

From Lemma~\ref{lemma: min RE}, the minimizing distribution of $H_{\alpha}$ is given by $Q_\beta \in \set{P}_n(\rho)$ where
\begin{align}
Q_\beta = \Bigl(\underbrace{ \, \rho \beta, \ldots, \rho \beta}_i, \, 1 - (n+i \rho - i - 1) \beta, \, \underbrace{\beta, \beta, \ldots, \beta}_{n-i-1} \Bigr)
\end{align}
with $\beta \in \left[ \frac1{1+(n-1)\rho}, \frac1n \right]$, and $1 - (n+i \rho - i - 1) \beta \leq \rho \beta \leq \frac{\rho}{n}$. It therefore follows that
the influence of the middle probability mass of $Q_\beta$ on $H_{\alpha}(Q_\beta)$ tends to zero as $n \to \infty$. Therefore, in this asymptotic case,
one can instead minimize $H_{\alpha}(\widetilde{Q}_m)$ where
\begin{align}
\widetilde{Q}_m = \Bigl(\underbrace{ \, \rho \beta, \ldots, \rho \beta}_m, \, \underbrace{\beta, \beta, \ldots, \beta}_{n-m} \Bigr)
\end{align}
with the free parameter $m \in \{1, \ldots, n\}$ and $\beta = \frac{1}{n+m(\rho-1)}$ (so that the total mass of $\widetilde{Q}_m$
is equal to~1).

For $\alpha \in (0,1) \cup (1, \infty)$, straightforward calculation shows that
\begin{align}
H_{\alpha}(\widetilde{Q}_m) &= \frac1{1-\alpha} \,
\log \left(\sum_{j=1}^n \widetilde{Q}_m^{\alpha}(j) \right) \nonumber \\
&= \log n - \frac1{\alpha-1} \log \left( \frac{1 + \frac{m}{n}
\, (\rho^\alpha-1)}{\left( 1 + \frac{m}{n} \, (\rho-1) \right)^{\alpha}} \right),
\end{align}
and by letting $n \to \infty$, the limit of the sequence $\{c_{\alpha}^{(n)}(\rho)\}_{n \in \naturals}$ exists, and it is equal to
\begin{align} \label{eq: c infty - opt.}
c_{\alpha}^{(\infty)}(\rho) &:= \lim_{n \to \infty} c_{\alpha}^{(n)}(\rho) \nonumber \\
&= \lim_{n \to \infty} \left( \log n - \min_{m \in \{1, \ldots, n\}} H_{\alpha}(\widetilde{Q}_m) \right) \nonumber \\
&= \lim_{n \to \infty} \max_{m \in \{1, \ldots, n\}} \left\{ \frac1{\alpha-1} \,
\log \left( \frac{1 + \frac{m}{n} \, (\rho^\alpha-1)}{\left(1+\frac{m}{n} \,
(\rho-1) \right)^{\alpha}} \right) \right\} \nonumber \\
&= \max_{x \in [0,1]} \left\{ \frac1{\alpha-1} \,
\log \left( \frac{1+(\rho^\alpha-1)x}{\bigl(1+(\rho-1)x\bigr)^\alpha} \right) \right\}.
\end{align}
Let $f_{\alpha} \colon [0,1] \to \Reals$ be given by
\begin{align}
f_\alpha(x) = \frac{1+(\rho^\alpha-1)x}{\bigl(1+(\rho-1)x\bigr)^\alpha}, \quad x \in [0,1].
\end{align}
Then, $f_{\alpha}(0) = f_{\alpha}(1) = 1$, and straightforward calculation shows that its
derivative vanishes if and only if
\begin{align} \label{eq: x ast}
x = x^\ast := \frac{1+\alpha (\rho-1) - \rho^\alpha}{(1-\alpha)(\rho-1)(\rho^\alpha-1)}
\end{align}
which, by the mean value theorem,\footnote{We rely here on a specialized version of the
mean value theorem, known as Rolle's theorem, which states that any real-valued differentiable
function that attains equal values at two distinct points must have a point somewhere between
them where the first derivative at this point is zero.} implies (due to the uniqueness of this point)
that $x^\ast \in (0,1)$. Substituting  \eqref{eq: x ast} into \eqref{eq: c infty - opt.}
gives \eqref{eq: c inf. - alpha neq 1}. Taking the limit of \eqref{eq: c inf. - alpha neq 1}
when $\alpha \to \infty$ gives the result in \eqref{eq: lim c inf. - alpha to inf.}.

In the limit where $\alpha \to 1$, the R\'{e}nyi entropy of order $\alpha$ tends to the Shannon
entropy. Hence, letting $\alpha \to 1$ in \eqref{eq: c inf. - alpha neq 1}, it follows that for
the Shannon entropy
\begin{align}
c_1^{(\infty)}(\rho) &= \lim_{\alpha \to 1} c_{\alpha}^{(\infty)}(\rho) \nonumber \\
&= \lim_{\alpha \to 1} \left\{ \frac1{\alpha-1} \, \log \left(1+\frac{1+\alpha \, (\rho-1) - \rho^\alpha}{(1-\alpha)(\rho-1)}\right)
- \frac{\alpha}{\alpha-1} \, \log \left(1+\frac{1+\alpha \, (\rho-1) - \rho^\alpha}{(1-\alpha)(\rho^\alpha-1)}\right) \right\} \nonumber \\
&= \frac{\rho \log \rho}{\rho-1} - \log \mathrm{e} - \log \left( \frac{\rho \log_{\mathrm{e}}\rho}{\rho-1} \right), \label{limit}
\end{align}
where \eqref{limit} follows by invoking L'H\^{o}pital's rule.
This proves \eqref{eq: c inf. for alpha=1}.

From \eqref{eq: bounded}--\eqref{def: c inf.}, we get $0 \leq c_{\alpha}^{(n)}(\rho) \leq c_{\alpha}^{(\infty)}(\rho)$.
Since $c_{\alpha}^{(n)}(\rho)$ is monotonically increasing in $\alpha \in [0, \infty]$, for every $n \in \naturals$, so is $c_{\alpha}^{(\infty)}(\rho)$;
hence, \eqref{eq: lim c inf. - alpha to inf.} yields $c_{\alpha}^{(\infty)}(\rho) \leq \log \rho$. This proves \eqref{eq: tightened bound}.

\bigskip

\end{document}